\definecolor{mylinkcolor}{rgb}{0,0,0.4}%
\newcommand\bibalias[2]{%
  \@namedef{bibali@#1}{#2}%
}
\newtoks\biba@toks
\let\bibalias@oldcite\cite
\def\cite{%
  \@ifnextchar[{%
    \biba@cite@optarg%
  }{%
    \biba@cite{}%
  }%
}
\newcommand\biba@cite@optarg[2][]{%
  \biba@cite{[#1]}{#2}%
}
\newcommand\biba@cite[2]{%
  \biba@toks{\bibalias@oldcite#1}%
  \def\biba@comma{}%
  \def\biba@all{}%
  \@for\biba@one@:=#2\do{%
    \edef\biba@one{\expandafter\@firstofone\biba@one@\@empty}%
    \@ifundefined{bibali@\biba@one}{%
      \edef\biba@all{\biba@all\biba@comma\biba@one}%
    }{%
      \PackageInfo{bibalias}{%
        Replacing citation `\biba@one' with `\@nameuse{bibali@\biba@one}'
      }%
      \edef\biba@all{\biba@all\biba@comma\@nameuse{bibali@\biba@one}}%
    }%
    \def\biba@comma{,}%
  }%
  %
  %
  \immediate\write\@auxout{\noexpand\bgroup\noexpand\renewcommand\noexpand\citation[1]{}\noexpand\citation{#2}\noexpand\egroup}%
  %
  %
  \edef\biba@tmp{\the\biba@toks{\biba@all}}%
  \biba@tmp%
}
\newtheoremstyle{fornatcomm}%
{\parskip}
{\parskip}
{%
  \normalfont
}
{}
{}
{:}
{.6em}
{%
  {\itshape\thmname{#1}\thmnumber{ #2}}{\thmnote{ (#3)}}%
}
\theoremstyle{fornatcomm}
\def\batt{ {\rm W} }
\def\bath{ {\rm R} }
\def\sys{ {\rm S} }
\def\anc{ {\rm A} }
\def\cat{ {\rm X} }
\def\id{\mathbbm{1}}
\newcommand{\kB}{k_\mathrm{B}}
\def\???{{\color{red}???}}
\def\final{\rho_{\rm S}'}  
\def\GTOlong{Non-Abelian Thermal Operations}
\def\GTOlongSing{Non-Abelian Thermal Operation}
\def\GTO{NATO}
\def\GGS{NATS}
\def\GGSlong{Non-Abelian Thermal State}
\definecolor{orange}{rgb}{1.0,0.3,0}
\def\<{\langle}
\def\>{\rangle}
\newcommand{\be}{\begin{eqnarray} \begin{aligned}}
\newcommand{\ee}{\end{aligned} \end{eqnarray} }
\newcommand{\benn}{\begin{eqnarray*} \begin{aligned}}
\newcommand{\eenn}{\end{aligned} \end{eqnarray*} }
\newcommand{\ben}{\begin{eqnarray} \begin{aligned}}
\newcommand{\een}{\end{aligned} \end{eqnarray} }
\newcommand{\bc}{\begin{center}}
\newcommand{\ec}{\end{center}}
\newcommand{\Tr}{\mathop{\mathsf{Tr}}\nolimits}
\newcommand{\norm}[1]{\left\| #1\right \|}
\newcommand{\abs}[1]{\left|#1 \right|}				
\newcommand{\beq}{\begin{eqnarray} \begin{aligned}}
\newcommand{\eeq}{\end{aligned} \end{eqnarray} }
\newcommand{\bea}{\begin{array}}
\newcommand{\eea}{\end{array}}
\newcommand{\bee}{\begin{enumerate}}
\newcommand{\eee}{\end{enumerate}}
\newcommand{\bei}{\begin{itemize}}
\newcommand{\eei}{\end{itemize}}
\newtheorem{theorem}{Theorem}
\newtheorem{proposition}[theorem]{Proposition}
\newtheorem{lemma}[theorem]{Lemma}
\newtheorem{definition}[theorem]{Definition}
\def\id{\mathbb{I}}
\def\01{\{0,1\}}
\newcommand{\ket}[1]{|#1\rangle}
\newcommand{\ketbra}[2]{|#1\rangle\langle#2|}
\def\<{\langle}
\def\>{\rangle}
\def\nats{\gamma_{\mathbf{v}}}
\def\gibbsS{\gamma_{\rm S}}  
\def\gibbsBatt{\gamma_{ \batt }}  
\def\gibbsSBatt{\gamma_{\sys{\batt}}}  
\newcommand\gibbsParam[1]{\gamma_{#1}}
\def\final{\rho_{\rm S}'}  
\def\initialBatt{\rho_{ \batt }}  
\def\finalBatt{{\rho'_{ \batt }}}  
\def\initial{\rho_\sys}
\newcommand{\qalfree}{{\hat F}}
\newcommand{\qalfreesimple}{{\tilde F}}
\newcommand{\sgn}{\operatorname{sgn}}
\def\workf{{\cal W}}
\newcommand{\newreptheorem}[2]{%
\newenvironment{rep#1}[1]{%
 \def\rep@title{#2 \ref{##1} (restatement)}%
 \begin{rep@theorem}}%
 {\end{rep@theorem}}}
\def\paragraph#1{%
  \smallskip%
  \par\noindent{\textbf{#1}}\quad
}
\begin{document}

%
%
\author{Nicole~Yunger~Halpern}
\email{nicoleyh@caltech.edu}
\affiliation{Institute for Quantum Information and Matter, Caltech, Pasadena, CA 91125, USA}
\author{Philippe~Faist}
\email{pfaist@phys.ethz.ch}
\affiliation{Institute for Theoretical Physics, ETH Z\"{u}rich, 8093 Z\"{u}rich, Switzerland}
\author{Jonathan~Oppenheim}
\affiliation{Department of Physics and Astronomy, University College London, Gower Street, London WC1E 6BT, United Kingdom}
\author{Andreas~Winter}
\email{andreas.winter@uab.cat}
\affiliation{ICREA \& F\'{i}sica Te\`{o}rica: Informaci\'{o} i Fen\`{o}mens Qu\`{a}ntics, Universitat Aut\`{o}noma de Barcelona, ES-08193 Bellaterra (Barcelona), Spain}

\title{Microcanonical and resource-theoretic derivations \protect\\ of the thermal state 
of a quantum system with noncommuting charges}

\date{22 April 2016}

\begin{abstract}


The grand canonical ensemble lies at the core
of quantum and classical statistical mechanics.
A small system thermalizes to this ensemble
while exchanging heat and particles with a bath.
A quantum system may exchange
quantities represented by operators that fail to commute.
Whether such a system thermalizes,
and what form the thermal state has,
are questions about truly quantum thermodynamics.
We investigate this thermal state
from three perspectives.
First, we introduce an approximate microcanonical ensemble.
If this ensemble characterizes the system-and-bath composite,
tracing out the bath yields the system's thermal state.
This state is expected to be the equilibrium point,
we argue, of typical dynamics.
Finally, we define a resource-theory model
for thermodynamic exchanges of noncommuting observables.
Complete passivity---the inability to extract work from equilibrium states---implies 
the thermal state's form, too.
Our work opens new avenues into equilibrium 
in the presence of quantum noncommutation.


\end{abstract}

\pacs{03.67.-a, 03.65.Ta, 05.70.Ln}
\keywords{Generalized Gibbs state, quantum thermodynamics, quantum information theory, equilibration, noncommutation, resource theory, non-Abelian}
\maketitle

%
%
%

Recently reignited interest in quantum thermodynamics has prompted information-theoretic approaches to fundamental questions.
have enjoyed particular interest.~\cite{gemmer2009quantum,Gogolin2015arXiv_review,Goold2015arXiv_review,Vinjanampathy2015arXiv_review}.
The role of entanglement, for example, has been clarified with canonical
typicality~\cite{goldstein2006canonical,gemmer200418,PopescuSW06,LindenPSW09}.
Equilibrium-like behaviors have been predicted~\cite{FermiPU55,KinoshitaWW06,Rigol07,PolkovnikovSSV11}
and experimentally observed in
integrable quantum gases~\cite{Langen15,LangenGS15}.

Thermodynamic resource theories offer a powerful tool for analyzing
fundamental properties of the thermodynamics of quantum systems.
Heat exchanges with a bath are modeled
with ``free states'' and ``free operations''~\cite{janzing_thermodynamic_2000,BrandaoHORS13,brandao2013second,HO-limitations}.
These resource theories have been extended to model exchanges of additional physical
quantities, such as particles and angular
momentum~\cite{FundLimits2,VaccaroB11,YungerHalpernR14,YungerHalpern14,Weilenmann2015arXiv_axiomatic}.

A central concept in thermodynamics and statistical mechanics is the thermal state.
The thermal state has several important properties.
First, typical dynamics evolve the system toward the thermal state.
The thermal state is the equilibrium state. 
Second, consider casting statistical mechanics as an inference problem. 
The thermal state is the state which maximizes the entropy 
under constraints on physical quantities~\cite{Jaynes57I,Jaynes57II}.  
Third, consider the system as interacting with a large bath.
The system-and-bath composite occupies a microcanonical state.
Physical observables of the composite,
such as the total energy and total particle number,
have sharply defined values.
The system's reduced state is the thermal state.  
Finally, in a resource theory, 
the thermal state is the only completely passive state.
No work can be extracted from any number of copies of the thermal state~\cite{PuszW78,Lenard78}.

If a small system exchanges heat and particles with a large environment,
the system's thermal state
is a grand canonical ensemble:
$e^{ - \beta( H - \mu N ) } / Z$.
The system's Hamiltonian and particle number are represented by $H$ and $N$.
$\beta$ and $\mu$ denote the environment's 
inverse temperature and chemical potential.
The partition function $Z$ normalizes the state.
The system-and-bath dynamics conserves 
the total energy and total particle number.
More generally, subsystems exchange conserved quantities, or  ``charges,''
$Q_j,  \;  \:  j = 1, 2, \ldots c$.  To these charges correspond
generalized chemical potentials $\mu_j$.
The $\mu_j$'s characterize the bath.

We address the following question.
Suppose that the charges fail to commute with each other:
$[Q_j,  Q_k]  \neq  0$.
What form does the thermal state have?
We call this state ``the \GGSlong{}'' (\GGS{}).
Jaynes applied the Principle of Maximum Entropy to this question~\cite{Jaynes57II}.
He associated fixed values $v_j$ with the charges' expectation values.
He calculated the state that,
upon satisfying these constraints,
maximizes an entropy.
This thermal state has a generalized Gibbs form:
\begin{align}
   \label{eq:GGS}
   \gamma_{ \mathbf{v} }
   :=   \frac{1}{Z}   e^{ - \sum_{j = 0}^c  \mu_j  Q_j } \ ,
\end{align}
wherein the  the $v_j$'s determine the $\mu_j$'s.

Our contribution is a mathematical, physically justified derivation of the thermal state's form
for systems whose dynamics conserve noncommuting observables.
We recover the state~\eqref{eq:GGS} 
via several approaches, demonstrating its physical importance.
We address  puzzles raised in~\cite{YungerHalpern14,Imperial15} about how
to formulate a resource theory in which thermodynamic charges fail to commute.
Closely related, independent work was performed by Guryanova \emph{et al.}~\cite{teambristol}.
We focus primarily on the nature of passive states.
Guryanova \emph{et al.}, meanwhile, focus more 
on the resource theory for multiple charges
and on tradeoffs amongst types of charge extractions.

In this paper, we derive the \GGS{}'s form from a microcanonical argument.
A simultaneous eigenspace of all the noncommuting
physical charges might not exist.
Hence we introduce the notion of an approximate microcanonical subspace.
This subspace consists of the states in which
the charges have sharply defined values.
We derive conditions under which this subspace exists.
We show that a small subsystem's reduced state
lies, on average, close to $\gamma_{ \mathbf{v} }$.
Second, we invoke canonical typicality~\cite{PopescuSW06,LindenPSW09}.
If the system-and-bath composite 
occupies a random state in the approximate microcanonical subspace, 
we argue, a small subsystem's state likely lies close to the \GGS{}. 
Typical dynamics are therefore expected to evolve 
a well-behaved system's state towards the \GGS{}.
Third, we define a resource theory for thermodynamic exchanges 
of noncommuting conserved charges.
We extend existing resource theories 
to model the exchange of noncommuting quantities. 
We show that the \GGS{} is the only possible free state 
that renders the theory nontrivial: 
 Work cannot be extracted from 
any number of copies of $\gamma_{ \mathbf{v} }$.  
We show also that the \GGS{} is the only state preserved by free operations.  
From this preservation, we derive ``second laws'' that govern state transformations.
This work provides a well-rounded, and novelly physical, perspective
on equilibrium in the presence of quantum noncommutation.
This perspective opens truly quantum avenues in thermodynamics.

%
%
%
\section{Results}
\label{section:Results}

\subsection{Overview}

We derive the \GGSlong{}'s form via three routes:
from a microcanonical argument,
from a dynamical argument
built on canonical typicality,
and from complete passivity in a resource theory.
Details appear in Supplementary Notes~\ref{supp-section:SI_Micro}--\ref{supp-section:SI_RT}.

\subsection{Microcanonical derivation}

In statistical mechanics, the form $e^{- \beta ( H - \mu N) } / Z$
of the grand canonical ensemble is well-known to be derivable 
as follows.
The system of interest is assumed 
to be part of a larger system.
Observables of the composite have fixed values $v_j$.
For example, the energy equals $E_0$,
and the particle number equals $N_0$.
The microcanonical ensemble
is the whole-system state spread uniformly across 
these observables' simultaneous eigenspace.
Tracing out the environmental degrees of freedom 
yields the state $e^{ - \beta (H - \mu N) } / Z$.

We derive the \GGS{}'s form similarly.
Crucially, however, we adapt the above strategy 
to allow for noncommuting observables.
Observables might not have well-defined values $v_j$ simultaneously.
Hence a microcanonical ensemble as discussed above, suitable for commuting 
observables, may not exist.
We overcome this obstacle by introducing an
approximate microcanonical ensemble $\Omega$.
We show that, for every state satisfying the conditions of
an approximate microcanonical ensemble, tracing out most of the larger system
yields, on average, a state close to the \GGS{}.
We exhibit conditions under which an approximate 
microcanonical ensemble exists.
The conditions can be satisfied when the
larger system consists of many noninteracting replicas of the system.
An important step in the proof consists of reducing the 
noncommuting case to the commuting one.
This reduction relies on a result by Ogata~\cite[Theorem 1.1]{Ogata11}.
A summary appears in Fig.~\ref{fig:Overview}.

%
%
%
%
\begin{mainfigure}
  \centering
  \includegraphics[width=87mm]{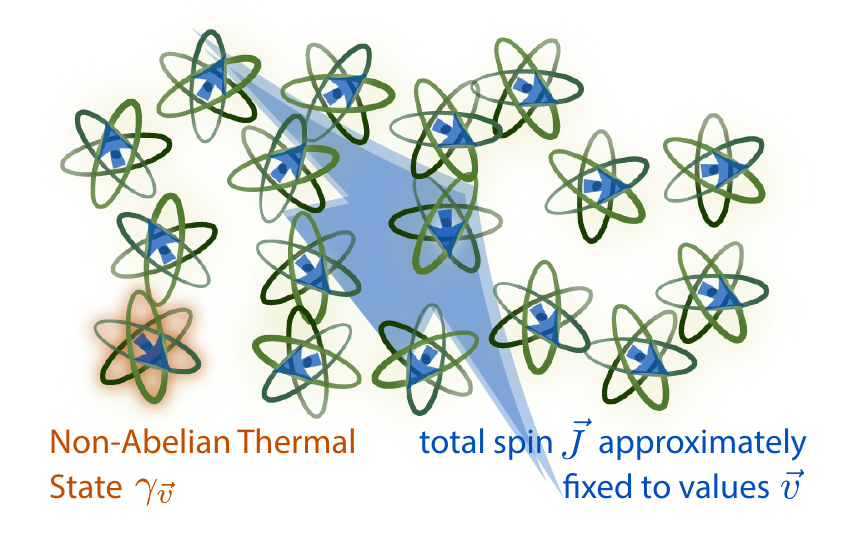}
  \caption{\textbf{\GGSlong{}: }
    We derive the form of the thermal state
of a system that has charges
that might not commute with each other.
Example charges include the components $J_i$
of the spin $\mathbf{J}$.
We derive the thermal stateÕs form
by introducing an approximate microcanonical state.
An ordinary microcanonical ensemble
could lead to the thermal stateÕs form 
if  the charges commuted:
Suppose, for example, that the charges were 
a Hamiltonian $H$
and a particle number $N$
that satisfied $[H, N] = 0$.
Consider many copies of the system.
The composite system could have a well-defined energy $E_{\rm tot}$
and particle number $N_{\rm tot}$ simultaneously.
$E_{\rm tot}$ and $N_{\rm tot}$ would correspond to 
some eigensubspace $\mathcal{H}_{E_{\rm tot}, N_{\rm tot}}$
shared by the total Hamiltonian
and the total-particle-number operator.
The (normalized) projector onto $\mathcal{H}_{E_{\rm tot}, N_{\rm tot}}$
would represent the composite systemÕs microcanonical state.
Tracing out the bath would yield the systemÕs thermal state.
But the charges $J_i$ under consideration
might not commute.
The charges might share no eigensubspace.
Quantum noncommutation demands a modification 
of the ordinary microcanonical argument.
We define an approximate microcanonical subspace $\mathcal{M}$.
Each state in $\mathcal{M}$ 
simultaneously has almost-well-defined values 
of noncommuting whole-system charges:
Measuring any such whole-system charge
has a high probability
of outputting a value
close to an ``expectedÕÕ value 
analogous to $E_{\rm tot}$ and $N_{\rm tot}$.
We derive conditions under which 
the approximate microcanonical subspace $\mathcal{M}$ exists.
The (normalized) projector onto $\mathcal{M}$
represents the whole systemÕs state.
Tracing out most of the composite system yields 
the reduced state of the system of interest.
We show that the reduced state is, on average,
close to the \GGSlong{} (\GGS{}).
This microcanonical derivation of the \GGS{}'s form
links Jaynes's information-theoretic derivation to physics.}
  \label{fig:Overview}
\end{mainfigure}

%
%
%
%
Set-up:  
Let $\mathcal{S}$ denote a system
associated with a Hilbert space $\mathcal{H}$;
with a Hamiltonian $H  \equiv  Q_0$;
and with observables (which we call ``charges'') $Q_1, Q_2, \ldots, Q_c$.
The charges do not necessarily commute with each other:
$[Q_j,  Q_k]  \neq  0$.

Consider $N$ replicas of $\mathcal{S}$, associated with 
the composite system Hilbert space $\mathcal{H}^{\otimes N}$.
We average each charge $Q_j$ over the $N$ copies:
\begin{align}
  \bar{Q}_j := \frac{1}{N}  \sum_{ \ell = 0}^{ N - 1}  
                \id^{\otimes \ell }  \otimes  Q_j  \otimes  \id^{ \otimes (N - 1 - \ell) }.
\end{align}
The basic idea is that, as $N$ grows, the averaged operators $\bar{Q}_j$
come increasingly to commute. 
Indeed, there exist operators operators $\bar{Y}_j$ 
that commute with each other
and that approximate the averages~\cite[Theorem 1.1]{Ogata11}.
An illustration appears in Fig.~\ref{fig:OgataProofSetup}.

\begin{mainfigure}
  \centering
  \includegraphics{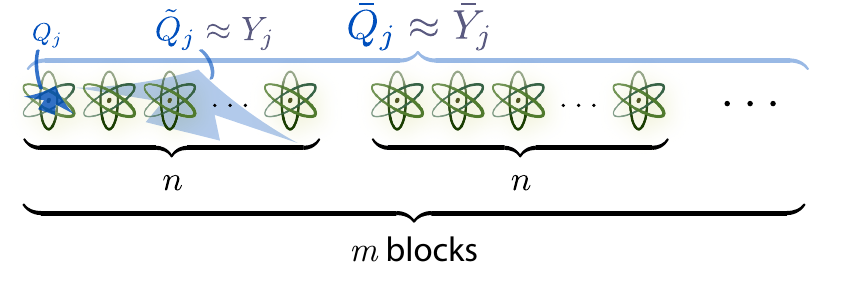}
  \caption{\textbf{Noncommuting charges:  }
We consider a thermodynamic system $\mathcal{S}$
that has conserved charges $Q_j$.
These $Q_j$'s might not commute with each other.
The system occupies a thermal state
whose form we derive.
The derivation involves an approximate microcanonical state
of a large system that contains the system of interest.
Consider a block of $n$ copies of $\mathcal{S}$.
Most copies act, jointly, similarly to a bath
for the copy of interest.
We define $\tilde Q_j$ as the average of the $Q_j$'s 
of the copies in the block. 
Applying results from Ogata~\protect\cite{Ogata11},
we find operators $\tilde{Y}_j$
that are close to the $\tilde Q_j$'s
and that commute with each other. 
Next, we consider $m$ such blocks.
This set of $m$ blocks
contains $N=mn$ copies of $\mathcal{S}$. 
Averaging the $\tilde{Q}_j$'s over the blocks, for a fixed $j$-value, 
yields a global observable $\bar Q_j$.
The $\bar Q_j$'s are approximated by $\bar Y_j$'s.
The $\bar Y_j$'s are the corresponding averages of the $\tilde{Y}_j$'s.
The approximate global charges $\bar{Y}_j$ 
commute with each other.
The commuting $\bar{Y}_j$'s
enable us to extend the concept of a microcanonical ensemble
from the well-known contexts in which all charges commute
to truly quantum systems whose charges do not necessarily commute.}
\label{fig:OgataProofSetup}
\end{mainfigure}

%
%
%
%
Derivation:
Since the $\bar{Y}_j$'s commute mutually, they can be measured
simultaneously. 
More importantly, the joint Hilbert space $\mathcal{H}^{\otimes n}$
contains a subspace
on which each $\bar{Q}_j$ has 
prescribed values close to $v_j$.
Let $\mathcal{M}$ denote the subspace.
Perhaps unsurprisingly, because the $\bar{Y}_j$'s approximate the $\bar{Q}_j$'s,  
each state in $\mathcal{M}$ 
has a nearly well-defined value of $\bar{Q}_j$ near $v_j$.
If $\bar{Q}_j$ is measured, the distribution
is sharply peaked around $v_j$. 
We can also show the opposite: every state with nearly well-defined values $v_j$ of all
$\bar{Q}_j$'s has most of its probability weight in $\mathcal{M}$.

These two properties show that $\mathcal{M}$ is
an approximate microcanonical subspace for the $\bar{Q}_j$'s
with values $v_j$. The notion of the approximate microcanonical subspace
is the first major contribution of our work. It captures the idea 
that, for large $N$, we can approximately fix the values of the
noncommuting charges $Q_j$.
An approximate microcanonical subspace $\mathcal{M}$ is any subspace
consisting of the whole-system states whose average observables $\bar{Q}_j$
have nearly well-defined values $v_j$. More precisely,
a measurement of any $\bar{Q}_j$ has a high probability of yielding 
a value near $v_j$ if and only if most of the state's probability weight lies 
in $\mathcal{M}$.

%
%
Normalizing the projector onto $\mathcal{M}$
yields an approximate microcanonical ensemble, $\Omega$.
Tracing out every copy of $\mathcal{S}$ but the $\ell^{\rm th}$
yields the reduced state $\Omega_\ell$.
The distance between $\Omega_\ell$
and the \GGS{} $\gamma_{ \mathbf{v} }$
can be quantified by the relative entropy
\begin{align} 
   \label{eq:RelEntMain}
   D ( \Omega_\ell \| \gamma_{ \mathbf{v} } )
   :=  - S( \Omega_\ell )
        -  \Tr (  \Omega_\ell    \log \gamma_{ \mathbf{v} } ).
\end{align}
Here, $S ( \Omega_\ell )  :=  - \Tr ( \Omega_\ell  \log  \Omega_\ell )$
is the von Neumann entropy.
The relative entropy $D$ is bounded by the trace norm $\| . \|_1$,
which quantifies the distinguishability 
of $\Omega_\ell$ and $\gamma_{ \mathbf{v} }$~\cite{HiaiOT81}:
\begin{align}
  \label{eq:PinskerIneq}
  D(\Omega_\ell\Vert\gamma_{\mathbf v})
  \geq \frac12 \norm{\Omega_\ell-\gamma_{\mathbf v}}_1^2.
\end{align}

Our second main result is that, if $\Omega$ is an approximate
microcanonical ensemble, then the average, over systems $\ell$,
of the relative entropy $D$ between $\Omega_\ell$ and $\gamma_{ \mathbf{v} }$
is small:
\begin{align}
     \label{eq:AvgRelEntMain}
      \frac{1}{N} \sum_{ \ell = 0}^{ N - 1} D ( \Omega_\ell \| \gamma_{ \mathbf{v} } )
       \leq \theta + \theta'.
\end{align}
The parameter $\theta  = \mathrm{(const.)} / \sqrt{N}$
vanishes in the many-copy limit.
$\theta'$ depends on the number $c$ of charges,
on the approximate expectation values $v_j$,
on the eigenvalues of the charges $Q_j$,
and on the (small) parameters in terms of which 
$\mathcal{M}$ approximates a microcanonical subspace.

Inequality~\eqref{eq:AvgRelEntMain} capstones the derivation.
The inequality follows from bounding each term in Eq.~\eqref{eq:RelEntMain},
the definition of the relative entropy $D$.
The entropy $S( \Omega_\ell )$ is bounded with $\theta$.
This bound relies on Schumacher's Theorem,
which quantifies the size of a high-probability subspace like $\mathcal{M}$
with an entropy $S(\gamma_{ \mathbf{v} })$~\cite{Schumacher95}.
We bound the second term in the $D$ definition with $\theta'$.
This bound relies on the definition of $\mathcal{M}$:
Outcomes of measurements of the $\bar{Q}_j$'s
are predictable up to parameters
on which $\theta'$ depends.

%
%
Finally, we present conditions
under which the approximate microcanonical subspace $\mathcal{M}$ exists.
Several parameters quantify the approximation.
The parameters are shown to be interrelated
and to approach zero simultaneously as $N$ grows. In particular,
the approximate microcanonical subspace 
$\mathcal{M}$ exists if $N$ is great enough.

%
%
%
%
This microcanonical derivation offers a physical counterpoint
to Jaynes's maximum-entropy derivation of the \GGS{}'s form.
We relate the \GGS{} to the physical picture
of a small subsystem in a vast universe
that occupies an approximate microcanonical state.
This vast universe allows the Correspondence Principle
to underpin our argument.
In the many-copy limit as $N \to \infty$,
the principle implies that quantum behaviors should vanish,
as the averages of the noncommuting charges $Q_j$
come to be approximated by commuting $\bar{Y}_j$'s.
Drawing on Ogata's~\cite[Theorem 1.1]{Ogata11},
we link thermality in the presence of noncommutation 
to the physical Correspondence Principle.


\subsection{Dynamical considerations}

The microcanonical and maximum-entropy arguments
rely on kinematics and information theory.
But we wish to associate the \GGS{}
with the fixed point of dynamics.
The microcanonical argument, combined with canonical typicality,
suggests that the \GGS{} is the equilibrium state of typical dynamics.
Canonical typicality enables us to model
the universe's state with a pure state
in the approximate microcanonical subspace $\mathcal{M}$.
If a large system occupies a randomly chosen pure state,
the reduced state of a small subsystem
is close to thermal~\cite{goldstein2006canonical,gemmer200418,PopescuSW06,LindenPSW09}.

Consider, as in the previous section, $N$ copies of the system $\mathcal{S}$.
By $\Omega$, we denoted
the composite system's approximately microcanonical state.
We denoted by $\Omega_\ell$
the reduced state of the $\ell^{\rm th}$ copy,
formed by tracing out most copies from $\Omega$.
Imagine that the whole system 
occupies a pure state $\ket{ \psi }  \in  \mathcal{M}$. 
Denote by $\rho_\ell$
the reduced state of the $\ell^{\rm th}$ copy.
$\rho_\ell$ is close to $\Omega_\ell$, on average,
by canonical typicality~\cite{PopescuSW06}:
\begin{equation}
   \<  \|  \rho_\ell -  \Omega_\ell  \|_1 \> \leq  \frac{d}{\sqrt{D_M}}.
   \label{eq:typical}
\end{equation}
The average $\langle . \rangle$ is over pure states $\ket{ \psi } \in \mathcal{M}$.
The trace norm is denoted by $ \| . \|_1$;
$d  :=  {\rm dim} ( \mathcal{H} )$ denotes the dimensionality
of the Hilbert space $\mathcal{H}$ of one copy of $\mathcal{S}$;
and $D_M  :=  {\rm dim} ( \mathcal{M} )$ denotes the dimensionality
of the approximate microcanonical subspace $\mathcal{M}$.

We have bounded, using canonical typicality, 
the average trace norm between $\rho_\ell$ and $\Omega_\ell$.
We can bound the average trace norm
between $\Omega_\ell$ and the \GGS{} $\gamma_{ \mathbf{v} }$,
using our microcanonical argument.
[Supplementary equation~\eqref{supp-eq:RelEnt} bounds the average relative entropy $D$
between $\Omega_\ell$ and  $\gamma_{ \mathbf{v} }$.
Pinsker's Inequality, Ineq.~\eqref{eq:PinskerIneq},
lower bounds $D$ in terms of the trace norm.]
Combining these two trace-norm bounds via the Triangle Inequality,
we bound the average distance between $\rho_\ell$ and $\gamma_{ \mathbf{v} }$:
\begin{equation}
   \biggl\langle    \frac1N    \sum_{\ell=0}^{N-1} 
         \| \rho_\ell   -    \gamma_{\mathbf{v}}  \|_1    \biggr\rangle
   \leq    \frac{d}{ \sqrt{D} }
          +   \sqrt{ 2 (\theta + \theta') }.
\label{eq:typical-2}
\end{equation}
If the whole system occupies a random pure state $\ket{ \psi }$
in $\mathcal{M}$,
the reduced state $\rho_\ell$ of a subsystem 
is, on average, close to the \GGS{} $\gamma_{ \mathbf{v} }$.

Sufficiently ergodic dynamics
is expected to evolve the whole-system state
to a $\ket{\psi}$ that satisfies Ineq.~\eqref{eq:typical-2}:
Suppose that the whole system begins in a pure state $\ket{ \psi (t{=}0) } \in \mathcal{M}$.
Suppose that the system's Hamiltonian
commutes with the charges: $[H,Q_j]=0$ for all $j=1,\ldots,c$.
The dynamics conserves the charges.
Hence most of the amplitude of $\ket{ \psi (t) }$
remains in $\mathcal{M}$ for appreciable times.
Over sufficient times, ergodic dynamics yields 
a state $\ket{ \psi(t) }$ that can be regarded as random.
Hence the reduced state is expected be close to 
$\Omega_\ell \approx \gamma_{ \mathbf{v}}$ for most long-enough times $t$.

Exploring how the dynamics depends on the number of copies of the system offers promise for interesting future research.



\subsection{Resource theory}

A thermodynamic resource theory is an explicit characterization of 
a thermodynamic system's resources, free states, and free operations 
with a rigorous mathematical model.
The resource theory specifies what 
an experimenter considers valuable (e.g., work) and what
is considered plentiful, or free (e.g., thermal states). 
To define a resource theory, we specify allowed operations and
which states can be accessed for free.
We use this framework to quantify the resources needed to transform one state into another. 

The first resource theory was entanglement theory~\cite{HorodeckiHHH09}.
The theory's free operations are local operations and classical communication (LOCC).
The free states are the states which can be easily prepared with LOCC,
the separable states. Entangled states constitute valuable resources. 
One can quantify entanglement using this resource theory. 

We present a resource theory for thermodynamic systems 
that have noncommuting conserved charges $Q_j$.
The theory is defined by its set of free operations, which we call ``{\GTOlong}'' (\GTO).
\GTO{} generalize thermal operations~\cite{janzing_thermodynamic_2000,FundLimits2}.  
How to extend thermodynamic resource theories 
to conserved quantities other than energy was noted in~\cite{FundLimits2,YungerHalpernR14,YungerHalpern14}. 
The {\GTO} theory is related to the resource theory in \cite{Imperial15}.

We supplement these earlier approaches with two additions.
First, a battery has a work payoff function dependent on chemical potentials.
We use this payoff function to define chemical work.
Second, we consider a reference system for a non-Abelian group.
The reference system is needed to resolve the difficulty encountered in \cite{YungerHalpern14,Imperial15}: There
might be no nontrivial operations which respect all the conservation laws. 
The laws of physics require that 
any operation performed by an experimenter
commutes with all the charges. 
If the charges fail to commute with each other, there might be no nontrivial unitaries
which commute with all of them. 
In practice, one is not limited by such a stringent constraint. 
The reason is that an experimenter has access to a reference frame~\cite{aharonov-susskind,kitaev2014super,BRS-refframe-review}.  
  
A reference frame is a system $W$
prepared in a state such that, for any unitary on a system $S$ which does not
commute with the charges of $S$, some global unitary on $WS$ conserves the total
charges and approximates the unitary on $S$ to arbitrary precision.  
The reference frame relaxes the strong constraint on the unitaries.  
The reference frame can be merged with the battery, 
in which the agent stores the ability to perform work.
We refer to the composite as ``the battery.''
We denote its state by $\rho_\batt$. 
The battery has a Hamiltonian $H_\batt$ and charges $Q_{j_\batt}$, described below.

Within this resource theory, the \GGSlong{} emerges in two ways:
\begin{enumerate}
   \item 
   The \GGS{} is the unique state from which work cannot be extracted,
   even if arbitrarily many copies are available. 
   That is, the \GGS{} is completely passive.
   \item The \GGS{} is the only state of $S$ that remains invariant under the free operations during which no work is performed on $S$.
\end{enumerate}

Upon proving the latter condition, we prove second laws for
thermodynamics with noncommuting charges.
These laws provide necessary conditions for a transition to be possible.
In some cases, we show, the laws are sufficient. 
These second laws govern state transitions of 
a system $\rho_\sys$, governed by a  Hamiltonian $H_\sys$,
whose charges $Q_{j_\sys}$ can be exchanged with the surroundings.
We allow the experimenter to couple  $\rho_\sys$ to free states $\rho_\bath$.
The form of $\rho_\bath$ is determined by the Hamiltonian $H_\bath$ and the charges $Q_{j_\bath}$
attributable to the free system.
We will show that these free states have the form of the \GGS. 
As noted above, no other state could be free.
If other states were free, an arbitrarily large amount of work 
could be extracted from them.

Before presenting the second laws,  
we must define ``work.''  
In textbook examples about gases, one defines
work as $\delta W = p\,dV$, because a change in volume at a fixed pressure can be
translated into the ordinary notion of mechanical work.  
If a polymer is stretched, then
$\delta W = F\,dx$, wherein $x$ denotes the polymer's linear displacement
and $F$ denotes the restoring force.
If $B$ denotes a magnetic field and $M$ denotes a medium's magnetization,
$\delta W = B\,dM$.  
The definition of ``work'' can depend on one's
ability to transform changes in thermodynamic variables into a standard
notion of ``work,'' such as mechanical or electrical work.

Our approach is to define a notion of chemical work. 
We could do so by modelling explicitly how the change in some quantity $Q_j$ 
can be used to extract $\mu_j \, \delta Q_j$ work.
Explicit modelling would involve adding a term to the battery Hamiltonian $H_\batt$. 
Rather than considering a specific work Hamiltonian or model of chemical work, 
however, we consider a work payoff function,
\begin{align}
  \workf= \sum_{j=0}^c\mu_j Q_{j_\batt}\ .
  \label{eq:workfunc}
\end{align}
The physical situation could determine the form of this $\workf$.
For example, the $\mu_j$'s could denote the battery's chemical potentials.
In such a  case, $\workf$ would denote the battery's total Hamiltonian, 
which would depend on those potentials. 

We choose a route conceptually simpler than considering 
an explicit Hamiltonian and battery system, however.
We consider Eq.~\eqref{eq:workfunc} as a payoff function that 
defines the linear combination of charges that interests us.
We define the (chemical) work expended or distilled during a transformation
as the change
in the quantum expectation value $\langle \workf \rangle$.

The form of $\workf$ is implicitly determined by the battery's structure
and by how charges can be converted into work.
For our purposes, however, the origin of the form of $\workf$ need not be known.
$\workf$ will uniquely determine the $\mu_j$'s in the \GGS.
Alternatively, we could first imagine that the agent could access, 
for free, a particular \GGS.
This \GGS{}'s form would determine the work function's form. 
If the charges commute, the corresponding Gibbs state is known to be 
the unique state that is completely passive with respect to the  observable~\eqref{eq:workfunc}.


%

In Supplementary Note~\ref{supp-section:SI_RT},
we specify the resource theory for noncommuting charges in more detail.
We show how to construct allowable operations, using the reference frame and battery. 
From the allowable operations, we derive a zeroth law of thermodynamics.
 
Complete passivity and zeroth law:
This zeroth law relates to the principle of complete passivity,
discussed in~\cite{PuszW78,Lenard78}.
A state is complete passive if,
an agent cannot extract work
from arbitrarily many copies of the state.
In the resource theory for heat exchanges, 
completely passive states can be free. 
They do not render the theory trivial 
because no work can be drawn from them~\cite{brandao2013second}.

In the \GTO{} resource theory, we show, the only reasonable free states 
have the \GGS{}'s form.
The free states' chemical potentials equal 
the $\mu_j$'s in the payoff function $\workf$, 
at some common fixed temperature.
 Any other state would render the resource theory trivial: 
 From copies of any other state,
arbitrary much work could be extracted for free. 
Then, we show that the \GGS{} is preserved by \GTO{},
the operations that perform no work on the system.
  
The free states form an equivalence class.
They lead to notions of temperature and chemical potentials $\mu_j$.
This derivation of the free state's form extends
complete passivity and the zeroth law from~\cite{brandao2013second} 
to noncommuting conserved charges.
The derivation further solidifies the role of the \GGSlong{} in thermodynamics.

%
%
%
%
Second laws:
The free operations preserve the \GGS{}.
We therefore focus on contractive measures 
of states' distances from the \GGS{}.
Contractive functions decrease monotonically under the free operations. 
Monotones feature in ``second laws'' that signal whether 
\GTO{} can implement a state transformation.
For example, the $\alpha$-R\'enyi relative entropies 
between a state and the \GGS{} cannot increase.  

Monotonicity allows us to define generalized free energies as
\begin{align}
  F_\alpha   \left(\rho_\sys,      \gibbsS  \right) 
  := \kB T D_\alpha\left(\rho_\sys   \Vert\gibbsS\right) 
  - \kB T\log Z\ ,
\end{align}
wherein $\beta  \equiv  1/ (k_{\rm B} T)$ and $\kB$ denotes Boltzmann's constant.
$\gibbsS$ denotes the \GGS\ with respect to 
the Hamiltonian $H_\sys$ and the charges $Q_{j_\sys}$ of the system $S$.
The partition function is denoted by $Z$.
Various classical and quantum definitions of the R\'enyi relative entropies $D_\alpha$ 
are known to be contractive~\cite{brandao2013second,HiaiMPB2010-f-divergences,Muller-LennertDSFT2013-Renyi,WildeWY2013-strong-converse,JaksicOPP2012-entropy}.
The free energies $F_\alpha$ decrease monotonically
if no work is performed on the system.
Hence the $F_\alpha$'s characterize 
natural second laws that govern achievable transitions.

For example, the classical R{\'e}nyi divergences $D_\alpha(\initial\|\gibbsS)$ 
are defined as 
\begin{equation}
   D_\alpha(\initial   \|   \gibbsS)
   := \frac{\sgn(\alpha)}{\alpha-1} \log 
   \left(   \sum_k    p_k^\alpha    q_k^{1-\alpha}   \right),
   \label{eq:renyidivergence}
\end{equation}
wherein $p_k$ and $q_k$ denote the probabilities 
of $\initial$ and of $\gibbsS$ in the $\workf$ basis. 
The $D_\alpha$'s lead to second laws that hold 
even in the absence of a reference frame
and even outside the context of the average work.


The $F_\alpha$'s reduce to the standard free energy
when averages are taken over large numbers.
Consider the asymptotic (``thermodynamic'') limit 
in which many copies $( \initial )^{\otimes n}$ of $\initial$ are transformed.
Suppose that the agent has some arbitrarily small probability $\varepsilon$ of failing 
to implement the desired transition.
$\varepsilon$ can be incorporated into the free energies via a technique called ``smoothing''~\cite{brandao2013second}.
The average, over copies of the state, of every smoothed 
$F^\varepsilon_\alpha$ approaches $F_1$~\cite{brandao2013second}:
\begin{align}
  \lim_{n\rightarrow\infty}    \frac{1}{n}   
  F^\varepsilon_\alpha  & \Big(  
       (\initial)^{\otimes n},   ( \gibbsS )^{\otimes n}   \Big)
  =  F_1  \\
  & =   k_B T D(\rho_\sys   \|\gibbsS) -k_BT\log ( Z )   \\
  & =    
       \< H_\sys  \>_{\initial}   -   k_B TS(\initial)   
       +   \sum_{j = 1}^c   \mu_j   \<Q_{j_\sys}\>.
  \label{eq:freeenergy}
\end{align}
We have invoked the relative entropy's definition,
\begin{align}
   D(\rho_\sys   \|   \gibbsS)  :=   \Tr     \Big(  \rho_\sys   \log ( \rho_\sys ) \Big) 
   -  \Tr     \Big(  \rho_\sys   \log (\gibbsS)  \Big).
\end{align}
 Note the similarity between the many-copy average $F_1$ 
  in Eq.~\eqref{eq:freeenergy}
and the ordinary free energy,
$F = E - T \, dS + \sum_j \mu_j \, dN_j$.
The monotonic decrease of $F_1$ 
constitutes a necessary and sufficient condition 
for a state transition to be possible in the presence of a reference system
in the asymptotic limit.

In terms of the generalized free energies, we formulate second laws:
\begin{proposition}
  \label{prop:second-laws}
  In the presence of a heat bath 
  of inverse temperature $\beta$ and chemical potentials $\mu_j$, 
  the free energies   $F_\alpha(\initial,   \gibbsS)$ 
  decrease monotonically:
  \begin{align}
     F_{\alpha}(\initial,   \gibbsS) \geq F_{\alpha}(\final,  \gibbsS')
     \; \: \forall \alpha\geq 0,
  \end{align}
  wherein $\initial$ and $\final$ denote the system's initial and final states.
  The system's Hamiltonian and charges may transform
  from $H_\sys$ and $Q_{j_\sys}$ to $H'_\sys$ and $Q_{j_\sys}'$.
  The \GGS{}s associated with the same Hamiltonians and charges
  are denoted by $\gibbsS$ and $\gibbsS'$.
If
\begin{align}
& [\mathcal{W},  \final ]  =  0
    \quad {\rm and} \quad \nonumber\\
&   F_{\alpha}(\initial,  \gibbsS) \geq F_{\alpha}(\final, \gibbsS')
\;   \;   \forall   \alpha \geq 0,
\end{align}
some \GTO\ maps $\initial$ to $\final$.
\end{proposition}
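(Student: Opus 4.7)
The plan is to prove the two implications in Proposition~\ref{prop:second-laws} separately. Both rely on the \GGS-preservation property of \GTO\ established in the previous subsection and on standard properties of R\'enyi relative entropies.

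For the necessary direction, I would first rewrite $F_\alpha(\rho,\gamma) = \kB T\, D_\alpha(\rho \| e^{-\sum_j \mu_j Q_j})$, absorbing the $-\kB T \log Z$ into the divergence with respect to the unnormalized Gibbs operator so that changing partition functions cause no bookkeeping trouble. Recall that any \GTO\ arises from a global unitary $U$ on the system-bath-battery composite that commutes with every total charge, with the bath initialized in $\gibbs$; consequently the induced channel on system-plus-battery preserves the product (unnormalized) Gibbs operator. Applying contractivity of $D_\alpha$ under CPTP maps, which holds for each of the classical and quantum R\'enyi divergences cited in the text, together with additivity of $D_\alpha$ on product states and product Gibbs operators, I obtain an inequality of the shape $F_\alpha(\initial,\gibbsS) + F_\alpha(\initialBatt,\gibbsBatt) \geq F_\alpha(\final,\gibbsS') + F_\alpha(\finalBatt,\gibbsBatt)$. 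Restricting to processes that perform no work on the system---equivalently, leaving the battery state unchanged---the battery contributions cancel and the advertised system-only inequality drops out. A change of the system's Hamiltonian and charges is handled in the usual way by introducing a ``switch'' degree of freedom, without altering the structure of the argument.

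For the sufficient direction, the plan is to reduce the non-Abelian problem to the Abelian (classical) problem already solved in~\cite{brandao2013second}. The hypothesis $[\mathcal W,\final]=0$ forces $\final$ to be block-diagonal in the eigenbasis of $\mathcal W$, which is also an eigenbasis of $\gibbsS'$. In this common basis, both $D_\alpha(\initial \| \gibbsS)$ and $D_\alpha(\final \| \gibbsS')$ reduce to classical R\'enyi divergences between probability distributions over the eigenvalues of $\mathcal W$, so the classical sufficiency result of~\cite{brandao2013second} yields an explicit thermal-operation protocol implementing $\initial \to \final$. I would then lift this classical protocol to a genuine \GTO\ by invoking the non-Abelian reference frame embedded in the battery: any unitary the classical protocol requires but that fails to commute with the total non-Abelian charges can be approximated, to arbitrary precision, by a global unitary on system-plus-reference-frame that does.

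The hardest step will be the lifting in the sufficient direction. One must quantify the error introduced by replacing an unrestricted unitary with a charge-conserving one on the enlarged system, verify that the reference frame is returned to a state that still qualifies it as an allowed resource, and show that these approximations compose to a channel that actually implements $\initial \to \final$ while satisfying the definition of a \GTO. The necessary direction, by contrast, is a mechanical combination of R\'enyi-divergence contractivity with the \GGS-preservation property already in hand.
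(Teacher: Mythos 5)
Your proposal is correct and takes essentially the same route the paper does. In fact, the paper never writes out an explicit proof of this proposition: it relies on the preceding general remark that any CPTP-contractive distance from the preserved state is a monotone (together with the Gibbs-preservation established in Theorem~\ref{gibbspreserving}), and it defers the sufficiency direction to the second-laws machinery of~\cite{brandao2013second}; your sketch spells out precisely those ingredients, handles the changing Hamiltonian and the battery in the same way the paper does, and correctly isolates the reference-frame lifting as the one step that needs real work.
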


As in~\cite{brandao2013second}, additional laws can be defined 
in terms of quantum R{\'e}nyi divergences~\cite{HiaiMPB2010-f-divergences,Muller-LennertDSFT2013-Renyi,WildeWY2013-strong-converse,JaksicOPP2012-entropy}. 
This amounts to choosing, in Proposition~\ref{prop:second-laws}, 
a definition of the R\'enyi divergence which accounts for 
the possibility that $\initial$ and $\final$ 
have coherences relative to the $\workf_\sys$ eigenbasis.
Several measures are known to be
contractive~\cite{HiaiMPB2010-f-divergences,Muller-LennertDSFT2013-Renyi,WildeWY2013-strong-converse,JaksicOPP2012-entropy}.
They, too, provide a new set of second laws. 


%
%
%
%
Extractable work:
In terms of the free energies $F_\alpha$, we can bound the work 
extractable from a resource state via \GTO{}.
We consider the battery $W$ separately from the system $S$ of interest.
We assume that $W$ and $S$ occupy a product state.
(This assumption is unnecessary
if we focus on average work.)
Let $\initialBatt$ and $\finalBatt$
denote the battery's initial and final states.

For all $\alpha$,
\begin{align}
  F_\alpha(\initial   \otimes   \initialBatt,   \gibbsSBatt)
  \geq F_\alpha(\final   \otimes   \finalBatt,   \gibbsSBatt).
\end{align}
Since
$F_\alpha(   \initial   \otimes   \initialBatt,   \gibbsSBatt) 
= F_\alpha(\initial,   \gibbsS) +
F_\alpha   \left(\initialBatt,   \gibbsBatt   \right)$,
\begin{align}
     F_\alpha   \left( \finalBatt,   \gibbsBatt   \right)
     -   F_\alpha   \left( \initialBatt,   \gibbsBatt   \right)
     \leq
     F_\alpha(\initial,   \gibbsS) -  F_\alpha(\final,   \gibbsS).
     \label{eq:work-alpha-free-energyMAIN}
\end{align}
The left-hand side of Ineq.~\eqref{eq:work-alpha-free-energyMAIN} 
represents the work extractable during one implementation of $\initial   \to   \final$.
Hence the right-hand side 
bounds the  work extractable during the transition.

Consider extracting work from many copies of $\initial$
(i.e., extracting work from $\initial^{ \otimes n}$) in each of many trials.
Consider the average-over-trials extracted work, defined as
$\Tr(  \workf [ \finalBatt  -  \initialBatt ] )$.
The average-over-trials work extracted per copy of $\initial$ is
$\frac{1}{n}   \Tr(  \workf [ \finalBatt  -  \initialBatt ] )$.
This average work per copy has a high probability of lying close to
the change in the expectation value of the system's work function,
$\frac{1}{n} \Tr( \workf [\finalBattÊ -Ê \initialBatt] )
\approx    \Tr(  \workf [ \final  -  \initial ] )$, if $n$ is large.

Averaging over the left-hand side of Ineq.~\eqref{eq:work-alpha-free-energyMAIN} 
yields the average work $\delta \langle W \rangle$ extracted per instance of the transformation.
The average over the right-hand side
approaches the change in $F_1$ [Eq.~\eqref{eq:freeenergy}]:
\begin{align}
\delta \<  W  \>
  \leq   \delta \< H_\sys  \>_{\initial}   -   T  \,  \delta S(\initial)   
       +   \sum_{j = 1}^c   \mu_j   \,   \delta  \<Q_{j_\sys}\>.
\end{align}
This bound is achievable with a reference system, as shown in 
\cite{aberg2014catalytic,korzekwa2015extraction}.

We have focused on the extraction of work defined by $\workf$.
One can extract, instead, an individual charge $Q_j$.
The second laws do not restrict single-charge extraction.
But extracting much of one charge $Q_j$
precludes the extraction of much of another charge, $Q_k$.
 In Supplementary Note~\ref{supp-section:SI_RT}, we discuss the tradeoffs amongst
 the extraction of different charges $Q_j$.


%
%
%
\section{Discussion}
\label{section:Discussion}

We have derived, via multiple routes, the form of the thermal state
of a system that has noncommuting conserved charges.
First, we regarded the system as part of a vast composite
that occupied an approximate microcanonical state.
Tracing out the environment yields a reduced state
that lies, on average, close to a thermal state of the expected form.
This microcanonical argument, 
with canonical typicality,
suggests that the \GGS{} is the fixed point of typical dynamics.
Defining a resource theory,
we showed that the \GGS{} is the only completely passive state
and is the only state preserved by free operations.
These physical derivations
buttress Jaynes's information-theoretic derivation
from the Principle of Maximum Entropy.

Our derivations also establish tools
applicable to quantum noncommutation in thermodynamics.
In the microcanonical argument,
we introduced an approximate microcanonical state $\Omega$.
This $\Omega$ resembles the microcanonical ensemble
associated with a fixed energy, a fixed particle number, etc.\@
but accommodates noncommuting charges.
Our complete-passivity argument relies on a little-explored resource theory for thermodynamics,
in which free unitaries conserve noncommuting charges.

We expect that the equilibrium behaviors predicted here may be observed in experiments.
Quantum gases have recently demonstrated 
equilibrium-like predictions about integrable quantum systems~\cite{Rigol07,Langen15}.

%
From a conceptual perspective, our work shows that notions 
previously considered relevant only to commuting charges---for example, microcanonicals
subspace---extend to noncommuting charges.
This work opens fully quantum thermodynamics
to analysis with familiar, but suitably adapted, technical tools.


\paragraph{Data availability:}
Data sharing is not applicable to this article, as no datasets were generated or analysed during this study.

\paragraph{Acknowledgments:}
We thank David Jennings, Tim Langen, Elliott Lieb (who pointed us to~\cite{Ogata11}), Matteo Lostaglio, Shelly Moram, Joseph M. Renes, and Terry Rudolph for interesting conversations. We thank the authors of~\cite{Imperial15,teambristol} for their community spirit.  
Much of this paper was developed at ``Beyond i.i.d. in Information Theory
2015,'' hosted by BIRS.  AW's work was supported by the EU (STREP ``RAQUEL''), the ERC
(AdG ``IRQUAT''), the Spanish MINECO (grant FIS2013-40627-P) with the support of FEDER
funds, as well as by the Generalitat de Catalunya CIRIT, project 2014-SGR-966.  JO is
supported by an EPSRC Established Career Fellowship, the Royal Society, and FQXi.  NYH was
supported by an IQIM Fellowship and NSF grant PHY-0803371. The Institute for Quantum
Information and Matter (IQIM) is an NSF Physics Frontiers Center supported by the Gordon
and Betty Moore Foundation.  PhF acknowledges support from the European Research Council
(ERC) via grant No. 258932, from the Swiss National Science Foundation through the
National Centre of Competence in Research ``Quantum Science and Technology'' (QSIT), and
by the European Commission via the project ``RAQUEL.'' This work was partially supported
by the COST Action MP1209. 

\paragraph{Author contributions:}
All authors contributed equally to the 
science and to the writing of the present paper.

\paragraph{Competing financial interests:} The authors declare no competing financial interests.

%
%
%
%

\vspace{4em} 

 \section*{Supplementary information}\vspace{1em}








%
%



The microcanonical, dynamical, and resource-theory arguments are detailed below.

%
%
%
%
\section{Microcanonical derivation of the \GGS{}'s form}
  \label{section:SI_Micro}

Upon describing the set-up, we will define an approximate microcanonical subspace $\mathcal{M}$.
Normalizing the projector onto $\mathcal{M}$ yields
an approximate microcanonical state $\Omega$.
Tracing out most of the system from $\Omega$ 
leads, on average, to a state close to the \GGSlong{} $\gamma_{ \mathbf{v} }$.
Finally, we derive conditions under which $\mathcal{M}$ exists.

%
%
%
%
\paragraph{Set-up:} 
Consider a system $\mathcal{S}$ associated with 
a Hilbert space $\mathcal{H}$ of dimension $d := \dim (\mathcal{H})$.
Let $H \equiv Q_0$ denote the Hamiltonian.
We call observables denoted by 
$Q_1, \ldots, Q_c$ ``charges.''
Without loss of generality, we assume that the $Q_j$'s form a linearly independent set.
The $Q_j$'s do not necessarily commute with each other.
They commute with the Hamiltonian
if they satisfy a conservation law,
\begin{align}
   [H, Q_j]  =  0
   \; \:  \forall j = 1, \ldots, c.
\end{align}
This conservation is relevant to dynamical evolution,
during which the \GGS\ may arise as the equilibrium state.
However, our microcanonical derivation does not rely on conservation.

%
%
%
%
\paragraph{Bath, blocks, and approximations to charges:}
Consider many copies $n$ of the system $\mathcal{S}$. 
Following Ogata~\cite{Ogata11}, we consider 
an average $\tilde{Q}_j$, over the $n$ copies, of each charge $Q_j$ (Fig.~\ref{maintext-fig:OgataProofSetup} of the main text):
\begin{align}
   \tilde{Q}_j  :=
   \frac{1}{n}  \sum_{ \ell = 0 }^{ n - 1 }
   \id^{ \otimes \ell }  \otimes  Q_j  \otimes  \id^{ \otimes (n - 1 - \ell) }.
\end{align}


In the large-$n$ limit, the averages $\tilde{Q}_j$
are approximated by observables $\tilde{Y}_j$ that commute~\cite[Theorem 1.1]{Ogata11}:
\begin{align}
   & \| \tilde{Q}_j  -  \tilde{Y}_j  \|_\infty  
   \leq  \epsilon_{\mathrm{O}}(n)  \to  0,
   \text{ and } \\
   & [ \tilde{Y}_j,  \tilde{Y}_k ]  =  0
   \; \; \forall j, k  = 0, \ldots, c.
\end{align}
The $\tilde{Y}_j$'s are defined on $\mathcal{H}^{ \otimes n}$, 
$\| \cdot \|_\infty$ denotes the operator norm, and $\epsilon_{\mathrm{O}}(n)$ denotes a
  function that approaches zero as $n\to\infty$.

Consider $m$ blocks of $n$ copies of $\mathcal{S}$,
i.e., $N = nm$ copies of $\mathcal{S}$.
We can view one copy as the system of interest and 
$N - 1$ copies as a bath.
Consider the average, over $N$ copies, of a charge $Q_j$:
\begin{align}   
\label{eq:Ogata1}
   \bar{Q}_j :=
   \frac{1}{N}  \sum_{ \ell = 0}^{ N - 1}  
      \id^{\otimes \ell }  \otimes  Q_j  \otimes  \id^{ \otimes (N - 1 - \ell) }.
\end{align}
This $\bar{Q}_j$ equals also
the average, over $m$ blocks, of the block average $\tilde{Q}_j$:
\begin{align}
\label{eq:Ogata2}
   \bar{Q}_j   = 
   \frac{1}{m}  \sum_{ \lambda = 0}^{m - 1}
   \id^{ \otimes \lambda n }  \otimes  \tilde{Q}_j  \otimes  \id^{\otimes [ N - n (\lambda + 1) ] }.
\end{align}

Let us construct observables $\bar{Y}_j$ that approximate the $\bar{Q}_j$'s and that commute:
$[ \bar{Y}_j,  \bar{Y}_k ]  =  0$, and
$\| \bar{Q}_j  -  \bar{Y}_j  \|_\infty  \leq  \epsilon$
for all $m$.
Since $\tilde{Y}_j$ approximates the $\tilde{Q}_j$ in Eq.~\eqref{eq:Ogata2},
we may take
\begin{align}
   \bar{Y}_j  =
   \frac{1}{m}  \sum_{ \lambda = 0}^{m - 1}
   \id^{ \otimes \lambda n }  \otimes  \tilde{Y}_j  \otimes  \id^{\otimes [ N - n (\lambda + 1) ] }.
\end{align}

%
%
%
%
%
%
%
%

\paragraph{Approximate microcanonical subspace:}
Recall the textbook derivation
of the form of the thermal state
of a system that exchanges commuting charges with a bath.
The composite system's state occupies a microcanonical subspace.
In every state in the subspace, every whole-system charge, including the energy, 
has a well-defined value.
Charges that fail to commute 
might not have well-defined values simultaneously.
But, if $N$ is large, the $\bar{Q}_j$'s nearly commute;
they can nearly have well-defined values simultaneously.
This approximation motivates our definition
of an approximate microcanonical subspace $\mathcal{M}$.
If the composite system occupies any state in $\mathcal{M}$,
one has a high probability of being able to predict
the outcome of a measurement of any $\bar{Q}_j$.

%
%
%
%
\begin{definition}
\label{def:approx-microcanonical-subspace}
For $\eta,\eta',\epsilon, \delta, \delta' > 0$, an
\emph{$(\epsilon, \eta, \eta', \delta, \delta')$-approximate microcanonical (a.m.c.) subspace}
$\mathcal{M}$ of $\mathcal{H}^{\otimes N}$
associated with observables $Q_j$ and
with approximate expectation values $v_j$
consists of the states $\omega$ for which the
probability distribution over the possible outcomes of 
a measurement of any $\bar{Q}_j$ peaks sharply about $v_j$. 
More precisely, we denote by $\Pi_j^\eta$ the projector onto 
the direct sum of the eigensubspaces of $\bar Q_j$ 
associated with the eigenvalues in the interval $[v_j-\eta\Sigma(Q_j),v_j+\eta\Sigma(Q_j)]$.
Here, $\Sigma(Q) = \lambda_{\max}(Q)-\lambda_{\min}(Q)$ is the
spectral diameter of an observable $Q$.
$\mathcal{M}$ must satisfy the following conditions:
\begin{enumerate}
   
   \item  
   \label{item:Def1A}
   Let $\omega$ denote any state, defined on $\mathcal{H}^{\otimes N}$,
   whose support lies in $\mathcal{M}$. 
   A measurement of any $\bar{Q}_j$ is likely to
   yield a value near $v_j$:
   \begin{align}
      \label{eq:Ogata3}
      {\rm supp} (\omega)  \subset  \mathcal{M}
      \quad  \Rightarrow  \quad
      \Tr ( \omega  \Pi_j^\eta )  \geq  1 - \delta  
      \; \: \forall j.
   \end{align}
   
   \item
   \label{item:Def1B}
   Conversely, consider any state $\omega$, defined on $\mathcal{H}^{\otimes N}$,
   whose measurement statistics peak sharply. 
   Most of the state's probability weight lies in $\mathcal{M}$:
   \begin{align}
      \label{eq:Ogata3B}
      \Tr ( \omega  \Pi_j^{\eta'} )  \geq  1 - \delta'  \; \:  \forall j
      \quad  \Rightarrow \quad
      \Tr ( \omega  P)  \geq  1 - \epsilon,
   \end{align}
   wherein $P$ denotes the projector onto $\mathcal{M}$.
   
\end{enumerate}

\end{definition}

This definition merits two comments.
First, $\mathcal{M}$ is the trivial (zero) subspace 
if the $v_j$'s are inconsistent, i.e., if no state 
$\rho$ satisfies $\Tr ( \rho \, Q_j )  =  v_j  \; \;  \forall j$.
Second, specifying $(\eta,\eta',\epsilon, \delta, \delta')$
does not specify a unique subspace.
The inequalities enable multiple approximate microcanonical subspaces to satisfy 
Definition~\ref{def:approx-microcanonical-subspace}.
The definition ensures, however, that any two such subspaces
overlap substantially.

%
%
%
%
\medskip\noindent
\textbf{The approximate microcanonical subspace leads to the \GGS{}:}
Let us show that Definition~\ref{def:approx-microcanonical-subspace} exhibits the property
desired of a microcanonical state:
The reduced state of each subsystem is close to the \GGS{}.

We denoted by $P$ the projector onto the approximate microcanonical subspace $\mathcal{M}$.
Normalizing the projector yields the approximate microcanonical state
$\Omega  :=  \frac{1}{\Tr (P)  }P$.
Tracing out all subsystems but the $\ell^{\rm th}$ yields
$\Omega_\ell  :=  \Tr_{0, \ldots, \ell - 1, \ell + 1, \ldots, N-1} ( \Omega )$.

We quantify the discrepancy between $\Omega_\ell$ and the \GGS{}
with the relative entropy:
\begin{align}
   \label{eq:RelEnt}
   D ( \Omega_\ell \| \gamma_{ \mathbf{v} } )
   :=  - S( \Omega_\ell )
        -  \Tr \Big(  \Omega_\ell    \log  ( \gamma_{ \mathbf{v} } )  \Big).
\end{align}
wherein $S ( \Omega_\ell )  :=  - \Tr \Big( \Omega_\ell  \log  ( \Omega_\ell )  \Big)$
is the von Neumann entropy.
The relative entropy is lower-bounded by the trace norm,
which quantifies quantum states' distinguishability~\cite{HiaiOT81}:
\begin{align}
  \label{eq:PinskerIneq-2}
  D(\Omega_\ell\Vert\gamma_{\mathbf v})
  \geq \frac12 \norm{\Omega_\ell-\gamma_{\mathbf v}}_1^2.
\end{align}

%

\begin{theorem}
  \label{thm:microcan-implies-GGS}
  Let $\mathcal{M}$ denote an $(\epsilon,\eta,\eta',\delta,\delta')$-approximate
  microcanonical subspace of $\mathcal{H}^{\otimes N}$
  associated with the $Q_j$'s and the $v_j$'s,
  for $N \geq [2   \norm{Q_j}_\infty^2/(\eta^2) ]\log(2/\delta')$.
  The average, over the $N$ subsystems, 
  of the relative entropy between 
  each subsystem's reduced state $\Omega_\ell$ and the \GGS{} is small:
  \begin{align}
     \label{eq:AvgRelEnt}
      \frac{1}{N} \sum_{ \ell = 0}^{ N - 1} D ( \Omega_\ell \| \gamma_{ \mathbf{v} } )
       \leq \theta + \theta'.
  \end{align}
  This $\theta = \mathrm{(const.)} / \sqrt{N}$
  is proportional to a constant dependent on $\epsilon$, on the $v_j$'s, and on $d$.
  This  $\theta'   =   (c+1)   \mathrm{(const.)}   
  (\eta   +   2 \delta   \cdot   \max_j \{ \norm{Q_j}_\infty   \})$
  is proportional to a constant dependent on the $v_j$'s.
\end{theorem}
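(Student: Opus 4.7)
The plan is to expand the relative entropy via its definition,
\begin{align}
\frac{1}{N}\sum_{\ell=0}^{N-1} D(\Omega_\ell\|\gamma_{\mathbf v})
= -\frac{1}{N}\sum_\ell S(\Omega_\ell)
  -\frac{1}{N}\sum_\ell \Tr\bigl(\Omega_\ell\log\gamma_{\mathbf v}\bigr),
\end{align}
and to bound the two pieces separately. The entropy piece will produce $\theta$; the cross term will produce $\theta'$. Throughout, I would exploit that $\Omega=P/\Tr(P)$ is maximally mixed on $\mathcal{M}$, and that the reduced states $\Omega_\ell$ all share a common parent $\Omega$.

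For the entropy piece, subadditivity gives $\frac{1}{N}\sum_\ell S(\Omega_\ell)\geq \frac{1}{N}S(\Omega)=\frac{1}{N}\log\dim\mathcal{M}$. To lower-bound $\log\dim\mathcal{M}$, I would invoke a Schumacher-style compression argument: any subspace that captures all but an $\epsilon$ fraction of the mass of $\gamma_{\mathbf v}^{\otimes N}$ has dimension at least $2^{NS(\gamma_{\mathbf v})-O(\sqrt N)}$. To apply this to $\mathcal{M}$ itself, I must verify $\Tr(\gamma_{\mathbf v}^{\otimes N}P)\geq 1-\epsilon$, which by Condition~\ref{item:Def1B} of Definition~\ref{def:approx-microcanonical-subspace} reduces to the concentration statement $\Tr(\gamma_{\mathbf v}^{\otimes N}\Pi_j^{\eta'})\geq 1-\delta'$ for every $j$. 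Because the single-copy $Q_j$'s commute across tensor factors, measuring $\bar Q_j$ on $\gamma_{\mathbf v}^{\otimes N}$ yields iid samples of $Q_j$ on $\gamma_{\mathbf v}$ with mean $v_j=\Tr(\gamma_{\mathbf v}Q_j)$, so a Hoeffding bound with range $\|Q_j\|_\infty$ forces the stated threshold $N\gtrsim \|Q_j\|_\infty^2\log(1/\delta')/\eta^2$. Chaining these ingredients yields $-\frac{1}{N}\sum_\ell S(\Omega_\ell)\leq -S(\gamma_{\mathbf v})+\theta$ with $\theta=O(1/\sqrt N)$.

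For the cross term, I would use the Gibbs form $\log\gamma_{\mathbf v}=-\sum_j\mu_jQ_j-(\log Z)\id$ to rewrite
\begin{align}
-\frac{1}{N}\sum_\ell\Tr\bigl(\Omega_\ell\log\gamma_{\mathbf v}\bigr)
=\log Z+\sum_j\mu_j\Tr\bigl(\Omega\,\bar Q_j\bigr),
\end{align}
after recognizing $\frac{1}{N}\sum_\ell\Tr(\Omega_\ell Q_j)=\Tr(\Omega\bar Q_j)$. Using the identity $-S(\gamma_{\mathbf v})=-\log Z-\sum_j\mu_j v_j$, the two terms combine to $\sum_j\mu_j\bigl(\Tr(\Omega\bar Q_j)-v_j\bigr)$. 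Condition~\ref{item:Def1A} controls this residue: because $\Omega$ is supported in $\mathcal{M}$, $\Tr(\Omega\Pi_j^\eta)\geq 1-\delta$, so splitting the expectation of $\bar Q_j$ into the peak interval $[v_j-\eta\Sigma(Q_j),v_j+\eta\Sigma(Q_j)]$ and its complement gives $|\Tr(\Omega\bar Q_j)-v_j|\leq \eta\Sigma(Q_j)+2\delta\|Q_j\|_\infty$. Summing over the $c+1$ charges and absorbing $|\mu_j|$ and $\Sigma(Q_j)$ into a constant dependent on the $v_j$'s yields the stated $\theta'$.

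The main obstacle is the noncommutativity of the family $\{\bar Q_j\}$, which at first sight blocks both the concentration bound and the Schumacher step. The resolution is that, for each \emph{fixed} $j$, the operator $\bar Q_j$ is a sum of mutually commuting single-copy terms on the product state $\gamma_{\mathbf v}^{\otimes N}$, so each marginal concentration statement collapses to a classical iid estimate; a union bound over $j=0,\ldots,c$ then supplies the factor $c+1$ in $\theta'$. The remaining quantitative nuisance is tracking the explicit $O(1/\sqrt N)$ prefactor in Schumacher's bound, which depends on the spectral diameter of $\log\gamma_{\mathbf v}$ and hence on the $v_j$'s and on $d$. Pinsker's inequality~\eqref{eq:PinskerIneq-2} is then available to translate the relative-entropy bound into the trace-norm form needed downstream for the canonical-typicality argument.
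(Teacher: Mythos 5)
Your proposal reproduces the paper's proof essentially verbatim: the same decomposition of $D(\Omega_\ell\|\gamma_{\mathbf v})$ into an entropy term and a cross term, the same chain of subadditivity $\Rightarrow$ Schumacher-type dimension bound $\Rightarrow$ Hoeffding concentration via Condition~\ref{item:Def1B} to show $\Tr(\gamma_{\mathbf v}^{\otimes N}P)\geq 1-\epsilon$, the same identification $\frac1N\sum_\ell\Tr(\Omega_\ell Q_j)=\Tr(\Omega\bar Q_j)$, and the same use of Condition~\ref{item:Def1A} to bound $\lvert\Tr(\Omega\bar Q_j)-v_j\rvert$ by splitting into the peak interval and its complement. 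The only quibble is a misattribution in your closing remarks---the factor $c+1$ in $\theta'$ comes from the sum $\sum_{j=0}^c\lvert\mu_j\rvert\,\xi_j$ in the cross term, not from a union bound (which enters only in verifying $\mathcal M$ captures the Gibbs-state mass)---but you already stated the correct source earlier in the argument, so this is cosmetic.
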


\begin{proof}
  We will bound each term in the definition~\eqref{eq:RelEnt} of the relative entropy $D$.
  The von Neumann-entropy term $S( \Omega_\ell )$,
  we bound with Schumacher's theorem for typical subspaces.
  The cross term is bounded, by the definition of 
  the approximate microcanonical subspace $\mathcal{M}$,
  in terms of the small parameters that quantify the approximation.

  First, we lower-bound the dimensionality of $\mathcal{M}$ in terms of
  $\epsilon,\eta,\eta',\delta$,  and  $\delta'$.  
  Imagine measuring some $\bar{Q}_j$ 
  of the composite-system state $\gamma_{\mathbf v}^{\otimes N}$.
  This is equivalent to 
  measuring each subsystem's $Q_j$,
  then averaging the outcomes.
  Each $Q_j$ measurement would yield a random outcome
  $X^j_\ell   \in   [\lambda_{\min}({Q_j}),\lambda_{\max}({Q_j})]$, 
  for $\ell=0,\ldots,N-1$.  
  The average of these $Q_j$-measurement outcomes
  is tightly concentrated around $v_j$, by Hoeffding's Inequality~\cite{Hoeffding63}:
  \begin{align}
      1   -   \Tr \,\bigl(   \gamma_{\mathbf v}^{\otimes N}    \Pi_j^\eta   \bigr) 
      &= \operatorname{Pr}\left\{ \Bigl\lvert 
            \frac1N   \sum_{\ell = 0}^{N - 1}    X^j_\ell - v_j
            \rvert > \eta\Sigma(Q_j) \right\}  \\
      &\leq 2 \exp\left( - 2\eta^2 N \right) \\
      &\leq \delta',
  \end{align}
  for large enough $N$.
  From the second property in Definition~\ref{def:approx-microcanonical-subspace}, 
  it follows that
  $\Tr\,\bigl(\gamma_{\mathbf v}^{\otimes N} P\bigr) \geq 1-\epsilon $.
  Hence $\mathcal{M}$ is a high-probability subspace of
  $\gamma_{\mathbf v}^{\otimes N}$.  

  By Schumacher's Theorem, or by the stronger~\cite[Theorem I.19]{PhdWinter1999}, 
  \begin{align}
    S(\Omega) 
    = \log  \Big( \dim (P) \Big) 
    &\geq N S\bigl(\gamma_{\mathbf v}\bigr)
    - (\mathrm{const.})   \sqrt{N}   \\
    &= N S\bigl(\gamma_{\mathbf v}\bigr) - N\theta,
      \label{eq:lower-bound-approx-microcan-subspace-dim}
  \end{align}
  wherein $\theta  :=  (\mathrm{const.})  /  \sqrt N$.
  The constant depends on $\epsilon$,
  $d$, and the charge values $v_j$.
  The entropy's subadditivity implies that
  $S(\Omega) \leq    \sum_{\ell = 0}^{N - 1}   S(\Omega_\ell)$. 
  Combining this inequality with Ineq.~\eqref{eq:lower-bound-approx-microcan-subspace-dim} yields
  \begin{align}
    S\bigl(\gamma_{\mathbf v}\bigr) - \theta
    \leq \frac1N    \sum_{\ell = 0}^{N - 1}    S(\Omega_\ell).
    \label{eq:lower-bound-entropy-of-average-Omegal}
  \end{align}
  
  The support of $\Omega$ lies within $\mathcal{M}$:
  $\operatorname{supp} ( \Omega )  \subset  \mathcal{M}$.   Hence 
  $\Tr(\Omega   \,   \Pi^\eta_j)   =   1   \geq   1-\delta$ for all $j$. 
  Let  $\bar\Omega  :=   \frac1N    \sum_{\ell = 0}^{N - 1}   \Omega_\ell$.
  We will bound the many-copy average
  \begin{align} 
     w_j  := \Tr(  Q_j      \,  \bar \Omega)  
     & =  \label{eq:wj1}
            \frac1N    \sum_{\ell = 0}^{N - 1}    \Tr( \Omega_\ell   \,   Q_j)  \\
     & =   \label{eq:wj2}
             \Tr( \Omega  \, \bar Q_j).
  \end{align}
  Let us bound this trace from both sides.
  Representing $\bar Q_j = \sum_{q} q\,\Pi^q_j$
  in its eigendecomposition,
  we upper-bound the following average:
\begin{widetext}
  \begin{align}
     \Tr(\Omega  \,   \bar Q_j)
     & =   \sum_q q    \Tr  \left(\Omega  \,  \Pi^q_j  \right)  \\
     &  \leq [   v_j+\eta\Sigma(Q_j)   ]   
          \Tr  \left(\Omega  \,  \Pi^\eta_j  \right)
           + \norm{Q_j}_\infty   \Tr  \Big(   \Omega       \left[ \id-\Pi^\eta_j \right]  \Big)  \\
     &  \label{eq:Bound1}
         \leq v_j + \norm{Q_j}_\infty (\eta+\delta).
  \end{align} 
  We complement this upper bound with a lower bound:
  \begin{align}
     \Tr(\Omega    \,   \bar Q_j)
     &  \geq   [  v_j-\eta\Sigma(Q_j)  ]     
                    \Tr   \left(\Omega  \,  \Pi^\eta_j   \right) -
                  \norm{Q_j}_\infty   \Tr   \Big( \Omega    \left[ \id-\Pi^\eta_j \right]   \Big)   \\
     &  \label{eq:Bound2}
        \geq   [v_j-\eta\Sigma(Q_j)  ]  (1-\delta) - \norm{Q_j}_\infty\delta.
  \end{align}
\end{widetext}
  Inequalities~\eqref{eq:Bound1} and~\eqref{eq:Bound2} show that
  the whole-system average $w_j$ is close to the single-copy average $v_j$:
  \begin{align}
     \xi_j   \label{eq:xijDefn}
     := \abs{w_j - v_j}    
     &  =   \abs{\Tr   (\Omega     \,    \bar Q_j   ) - v_j}   \\
     & \label{eq:xijBound}
         \leq  (\eta + 2\delta) \norm{Q_j}_\infty .
  \end{align}
  %

  Let us bound the average relative entropy. By definition,
  \begin{align}
     \label{eq:DExpn}
    \frac1N   \sum_{\ell = 0}^{N - 1} 
         D\,(\Omega_\ell\Vert\gamma_{\mathbf v})
    = -\frac1N   \sum_{\ell = 0}^{N - 1}   \left[ S(\Omega_\ell) 
        +     \Tr  \Big(  \Omega_\ell   \log  ( \gamma_{\mathbf v} )   \Big)   \right].
  \end{align}
  Let us focus on the second term.
  First, we substitute in the form of $\gamma_{\mathbf{v}}$ 
  from Eq.~\eqref{maintext-eq:GGS} of the main text.
  Next, we substitute in for $w_j$, using Eq.~\eqref{eq:wj1}.
  Third, we substitute in $\xi_j$, using Eq.~\eqref{eq:xijDefn}.
  Fourth, we invoke the definition of $S( \gamma_{ \mathbf{v} } )$,
  which we bound with Ineq.~\eqref{eq:lower-bound-entropy-of-average-Omegal}:
  \begin{align}
    -\frac1N   \sum_{\ell = 0}^{N - 1}
        & \Tr  \,   \Big(   \Omega_\ell   \log  ( \gamma_{\mathbf v} )  \Big) \\
    & =  \frac1N   \sum_{\ell = 0}^{N - 1}
        \Big[ \log (Z)   
        +  \sum_{j = 0}^c   \mu_j   \Tr   (\Omega_\ell  \,   Q_j)  \Big]  \\
    & = \log Z +   \sum_{j = 0}^c   \mu_j w_j  \\
    & \leq \log Z +   \sum_{j = 0}^c   \mu_j v_j 
        +   \sum_{j = 0}^c   \abs{\mu_j}\xi_j   \\
    &= S(\gamma_{\mathbf v}) 
          +  \sum_{j = 0}^c   \abs{\mu_j}\xi_j  \\
    &  \leq  \frac{1}{N}   \sum_{ \ell  =  0}^{N - 1}   S ( \Omega_\ell )  
          +  \theta   +   \sum_{j = 0}^c   \abs{\mu_j}\xi_j.
  \end{align}
  Combining this inequality with Eq.~\eqref{eq:DExpn} yields
  \begin{align}
    \hspace{1em}&\hspace{-1em}%
    \frac1N    \sum_{\ell = 0}^{N - 1} 
                   D\,(   \Omega_\ell\Vert\gamma_{\mathbf v}   )
    \leq   \theta   +   \sum_{j = 0}^c   \abs{\mu_j}\xi_j \\
    &  \leq  \theta  +    (c+1)  \,  \left( \max_j  | \mu_j | \right)  
                                         \left( \max_j \xi_j \right)    \\
    &  \label{eq:LateDBound}
        \leq  \theta  +    (c+1)  \,  \left( \max_j  | \mu_j | \right) 
        \left[ (\eta  +  2\delta) \cdot  
                \max_j  \left\{   \| Q_j \|_\infty \right\}  \right].
  \end{align}
  The final inequality follows from Ineq.~\eqref{eq:xijBound}.
  Since the $v_j$'s determine the $\mu_j$-values,
  $(c+1) \left( \max_j  | \mu_j | \right)$ is a constant determined by the $v_j$'s.
  The final term in Ineq.~\eqref{eq:LateDBound}, therefore, is upper-bounded by 
  $\theta'  =   (c+1) \mathrm{(const.)}   
  (\eta   +   2 \delta)   \cdot    \max_j   \left\{ \norm{Q_j}_\infty  \right\}$.
\end{proof}

\paragraph{Existence of an approximate microcanonical subspace:}
Definition~\ref{def:approx-microcanonical-subspace} does not reveal
under what conditions  an approximate microcanonical subspace $\mathcal{M}$ exists.
We will show that an $\mathcal{M}$ exists
for $\epsilon, \eta, \eta', \delta, \delta'$ that can approach zero simultaneously,
for  sufficiently large $N$.
First, we prove the existence of a microcanonical subspace for commuting observables.
Applying this lemma to the $\tilde{Y}_j$'s
shows that $\mathcal{M}$ exists for noncommuting observables.

%
%
%
%
\begin{lemma}
\label{lemma:MicroSubspaceCom}
  Consider a Hilbert space $\mathcal{K}$ with commuting observables
  $X_j$, $j=0,\ldots,c$. For all $\epsilon, \eta, \delta > 0$
  and for sufficiently large $m$, there exists an
  $\left(\epsilon, \eta, \eta'{=}\eta, \delta, \delta'{=}\frac{\epsilon}{c+1}\right)$-approximate
  microcanonical subspace $\mathcal{M}$ of $\mathcal{K}^{\otimes m}$ 
  associated with the observables ${X}_j$ 
  and with the approximate expectation values $v_j$.
\end{lemma}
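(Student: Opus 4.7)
The plan is to define $\mathcal{M}$ as the joint spectral subspace of the averaged observables $\bar{X}_j := \frac{1}{m}\sum_{\ell=0}^{m-1}\mathbb{I}^{\otimes \ell} \otimes X_j \otimes \mathbb{I}^{\otimes(m-1-\ell)}$ on which each $\bar{X}_j$ takes eigenvalues inside the window $[v_j - \eta\Sigma(X_j),\, v_j + \eta\Sigma(X_j)]$. Since the $X_j$'s commute, so do the $\bar{X}_j$'s, and therefore so do the spectral projectors $\Pi_j^\eta$ of Definition~\ref{def:approx-microcanonical-subspace}. Their product $P := \prod_{j=0}^c \Pi_j^\eta$ is then itself an orthogonal projector, and I set $\mathcal{M} := \operatorname{range}(P)$.

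With this choice, property~\ref{item:Def1A} is immediate: if $\operatorname{supp}(\omega) \subset \mathcal{M}$, then $\Pi_j^\eta\,\omega = \omega$ for every $j$, so $\Tr(\omega\,\Pi_j^\eta) = 1 \geq 1-\delta$. Property~\ref{item:Def1B} follows from the commuting-projector union bound $\mathbb{I} - \prod_j \Pi_j^\eta \leq \sum_{j=0}^c (\mathbb{I} - \Pi_j^\eta)$, easily verified by simultaneous diagonalization. Explicitly,
\begin{align}
\Tr\bigl(\omega(\mathbb{I}-P)\bigr) \leq \sum_{j=0}^c \Tr\bigl(\omega(\mathbb{I}-\Pi_j^\eta)\bigr) \leq (c+1)\,\delta' = \epsilon
\end{align}
whenever $\Tr(\omega\,\Pi_j^\eta) \geq 1-\delta'$ for all $j$. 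This step crucially exploits the commutativity of the $\Pi_j^\eta$; it is exactly this step that must be replaced by Ogata's approximation argument in the noncommuting case treated later.

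The only role of the ``sufficiently large $m$'' hypothesis is to guarantee that $\mathcal{M}$ is nontrivial (so that the two conditions are not met only vacuously). Choose any single-copy state $\rho$ on $\mathcal{K}$ with $\Tr(\rho\,X_j) = v_j$ for all $j$; such a $\rho$ exists whenever the target vector $(v_0,\ldots,v_c)$ lies in the joint numerical range of the commuting family $\{X_j\}$, which is the natural consistency hypothesis on the prescribed values. Measuring $X_j$ on each copy of $\rho^{\otimes m}$ produces i.i.d.\ outcomes supported in $[\lambda_{\min}(X_j),\lambda_{\max}(X_j)]$ with mean $v_j$, so Hoeffding's inequality gives $\Tr\bigl(\rho^{\otimes m}\,\Pi_j^\eta\bigr) \geq 1 - 2\exp(-2\eta^2 m)$. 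Picking $m$ large enough that this exceeds $1-\delta' = 1 - \epsilon/(c+1)$ for every $j$, and then applying the union-bound argument above to $\omega = \rho^{\otimes m}$, yields $\Tr\bigl(\rho^{\otimes m}\,P\bigr) \geq 1 - \epsilon > 0$, so $\mathcal{M} \neq \{0\}$.

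There is no serious technical obstacle once one recognises that commutativity lets one take a product of spectral projectors and still get a projector: the two defining conditions then reduce to a tautology and a union bound. The only care required is to separate the algebraic step (which is trivial and holds for every $m$) from the nontriviality step (which does use large $m$, via Hoeffding) so that the resulting $\mathcal{M}$ is genuinely useful as the ambient subspace in Theorem~\ref{thm:microcan-implies-GGS}.
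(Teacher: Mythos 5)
Your proof is correct, and it uses the same subspace $\mathcal{M} = \operatorname{range}\bigl(\prod_{j=0}^c \Pi_j^\eta\bigr)$ and the same one-line verification of Condition~\ref{item:Def1A} as the paper. Where you genuinely diverge is in Condition~\ref{item:Def1B}: the paper first pinches $\omega$ with respect to the family of common eigenprojectors of the commuting averages, producing an $\omega'$ that commutes with every window projector, and then peels the projectors off one at a time; you instead invoke the commuting-projector operator inequality $\id - \prod_j \Pi_j^\eta \leq \sum_j \bigl(\id - \Pi_j^\eta\bigr)$ and trace it directly against $\omega$. This is a genuine simplification --- the pinching is superfluous because the operator inequality already holds on its own, and monotonicity of the trace against a positive state does the rest --- and it cleanly isolates the single place where commutativity enters, which you correctly identify as exactly the step that Ogata's approximation replaces in Theorem~\ref{thm:mocrocononocol}. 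You also make explicit where the ``sufficiently large $m$'' hypothesis is actually used (nontriviality of $\mathcal{M}$, via Hoeffding and consistency of the $v_j$'s), a point the paper's proof of this lemma leaves unspoken even though its later Theorem~\ref{thm:microcan-implies-GGS} relies on it. One cosmetic discrepancy worth noting: in the paper's proof the windows for $\Xi_j^\eta$ are written as $[v_j - \eta,\, v_j + \eta]$ rather than the $\Sigma(X_j)$-scaled windows of Definition~\ref{def:approx-microcanonical-subspace}; your version keeps the $\Sigma$-scaling and is actually the more faithful reading of the definition.
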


\begin{proof}
Recall that 
\begin{equation} 
  \bar{X}_j = \frac{1}{m} \sum_{\lambda=0}^{m-1}
   \id^{\otimes \lambda} \otimes    X_{ j }    \otimes \id^{\otimes (m-1-\lambda)}
\end{equation}
is the average of $X_j$ over the $m$ subsystems.
Denote by
\begin{equation}
  \Xi_j^\eta   := \bigl\{ v_j-\eta \leq \bar{X}_j \leq v_j+\eta \bigr\}
\end{equation}
the projector onto the direct sum of the $\bar X_j$ eigenspaces 
associated with the eigenvalues in $[v_j-\eta,v_j+\eta]$.
Consider the subspace $\mathcal{M}_{\rm com}^\eta$
projected onto by all the 
$X_j$'s.
The projector onto $\mathcal{M}_{\rm com}^\eta$ is
\begin{align}
   \label{align:UpsilonProd}
   P_{\rm com}  :=  \Xi_0^\eta   \,  \Xi_1^\eta  \cdots  \Xi_c^\eta.
\end{align}


Denote by $\omega$ any state whose support lies in   $\mathcal{M}_{\rm com}^\eta$.
Let us show that $\omega$ satisfies the inequality in~\eqref{eq:Ogata3}.
By the definition of $P_\mathrm{com}$, 
${\rm supp}( \omega )   \subset   {\rm supp} ( \Xi_j^\eta )$.
Hence  $\Tr \left( \omega   \Xi_j^\eta   \right)   =   1   \geq 1-\delta$.

Let us verify the second condition in Definition~\ref{def:approx-microcanonical-subspace}.
Consider any eigenvalue $\bar{y}_j$ of $\bar{Y}_j$, for each $j$.
Consider the joint eigensubspace, shared by the $\bar{Y}_j$'s,
associated with any eigenvalue $\bar{y}_1$ of $\bar{Y}_1$,
with any eigenvalue $\bar{y}_2$ of $\bar{Y}_2$, etc.
Denote the projector onto this eigensubspace of $\mathcal{H}^{\otimes N}$ by
$\mathcal{P}_{\bar y_1, \cdots, \bar y_c}$.

Let $\delta'   =   \frac{\epsilon}{c+1}$.
Let   $\omega$ denote any state, defined on  $\mathcal{H}^{\otimes N}$, for which
$\Tr  \left(   \omega   \,   \Xi_j^\eta   \right) \geq 1-\delta'$, 
for all $j=0,\ldots,c$.
The left-hand side of the second inequality in~\eqref{eq:Ogata3B} reads, 
$\Tr \left(   \omega   P_\mathrm{com}   \right)$.
We insert the resolution of identity
$ \sum_{\bar y_0,  \ldots, \bar y_c}    \mathcal{P}_{\bar y_0  \ldots  \bar y_c}$
into the trace.
The property $\mathcal{P}^2  =  \mathcal{P}$
of any projector $\mathcal{P}$
enables us to square each projector.
Because $[\mathcal{P}_{\bar y_0  \ldots   \bar y_c},   P_\mathrm{com}]=0$,
\begin{align}
  \Tr   \left(  \omega   P_\mathrm{com}\right)
  & = \Tr   \left(   \sum_{\bar y_0,  \ldots, \bar y_c}    \mathcal{P}_{\bar y_0  \ldots  \bar y_c} 
         \omega    \mathcal{P}_{\bar y_0   \ldots   \bar y_c} P_\mathrm{com} \right)   \\
  & =: \Tr \left(   \omega'    P_\mathrm{com}   \right),
\end{align}
wherein $\omega'  :=  \sum_{\bar y_0,  \ldots, \bar y_c}    \mathcal{P}_{\bar y_0  \ldots  \bar y_c} 
         \omega    \mathcal{P}_{\bar y_0   \ldots   \bar y_c}$ 
is $\omega$ pinched with the complete set 
$\{   \mathcal{P}_{\bar y_0\bar y_1 \ldots \bar y_c}\}$
of projectors~\cite{Hayashi02}.  
By this definition of $\omega'$,
$\Tr \left(   \omega'    \,    \Xi_j^\eta   \right)
= \Tr \left(   \omega   \,    \Xi_j^\eta   \right)
\geq   1-\delta'$, and
$[\omega',   \Xi_j^\eta]=0$.
For all $j$,  therefore,
\begin{align}
   \omega'    \,  \Xi_j^\eta 
   = \omega' - \omega'\left(\id-\Xi_j^\eta   \right) 
   =: \omega' -   \Delta_j,
\end{align}
wherein
\begin{align}
   \Tr ( \Delta_j  )
   = \Tr \left(   \omega'   \left[   \id-\Xi_j^\eta   \right]   \right) 
   \leq \delta'.
\end{align}
Hence
\begin{align}
   \Tr  \left(   \omega'   P_{\rm com}   \right)
   &   =   \Tr \left(   \omega'   \,   \Xi_0^\eta      \,    \Xi_1^\eta
            \cdots   \Xi_c^\eta   \right)  \\
   &  \geq \Tr \left(   \left[  \omega'   -   \Delta_0\right]  
                               \Xi_1^\eta   \cdots\Xi_c^\eta\right)   \\
   %
   %
   & \geq \Tr \left(   \omega'   \,   \Xi_1^\eta\cdots\Xi_c^\eta\right)  - \delta'   \\
   & \geq \Tr \left(   \omega'   \right) - (c+1)\delta'   \\
   & = 1 - (c+1)\delta'
     =   1  -  \epsilon.
\end{align}
As $\omega$ satisfies~\eqref{eq:Ogata3B},
$\mathcal{M}_\text{com}^\eta$ is an
$(\epsilon, \eta, \eta'{=}\eta, \delta, \delta'{=}\frac{\epsilon}{c+1})$-approximate 
microcanonical subspace.
\end{proof}

Lemma~\ref{lemma:MicroSubspaceCom} proves the existence of an approximate microcanonical subspace 
$\mathcal{M}_{\rm com}^\eta$ for the $\tilde{Y}_j$'s defined on $\mathcal{K} = \mathcal{H}^{\otimes n}$
and for sufficiently large $n$.
In the subsequent discussion, we denote
by $\Upsilon_j^\eta$ the projector onto the direct sum of the $\bar Y_j$ eigenspaces 
associated with the eigenvalues in $[v_j-\eta\Sigma(\tilde{Y}_j),v_j+\eta\Sigma(\tilde{Y}_j)]$.
Passing from $\tilde{Y}_j$ to $\tilde{Q}_j$ to $Q_j$,
we now prove that the same $\mathcal{M}_{\rm com}^\eta$ is an
approximate microcanonical subspace for the $Q_j$'s.

%
%
%
%
\begin{theorem}
  \label{thm:mocrocononocol}
  Under the above assumptions, 
  for every $\epsilon > (c+1)\delta' > 0$, $\eta > \eta' > 0$,
  $\delta > 0$, and all sufficiently large $N$, there exists an
  $(\epsilon, \eta, \eta', \delta, \delta')$-approximate
  microcanonical subspace $\mathcal{M}$ of $\mathcal{H}^{\otimes N}$ 
  associated with the observables $Q_j$ 
  and with the approximate expectation values $v_j$.  
\end{theorem}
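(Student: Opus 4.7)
The strategy is to reduce the noncommuting problem to the commuting case of Lemma~\ref{lemma:MicroSubspaceCom}, using Ogata's operator-norm approximation to transfer approximate-microcanonical statements from the $\bar Y_j$'s to the $\bar Q_j$'s. From Ogata's theorem, already invoked in the set-up, $\|\tilde Q_j - \tilde Y_j\|_\infty \leq \epsilon_{\rm O}(n)$ with the $\tilde Y_j$'s mutually commuting. Averaging over the $m$ blocks preserves this bound, yielding $\|\bar Q_j - \bar Y_j\|_\infty \leq \epsilon_{\rm O}(n)$ with the $\bar Y_j$'s also mutually commuting. Since $\epsilon_{\rm O}(n) \to 0$, the block size $n$ supplies a tunable budget for perturbation errors, while the spectral diameters satisfy $|\Sigma(\bar Y_j) - \Sigma(\bar Q_j)| \leq 2\epsilon_{\rm O}(n)$.

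I would then apply Lemma~\ref{lemma:MicroSubspaceCom} to the commuting family $\{\bar Y_j\}$ on $\mathcal{K}^{\otimes m}$, with parameters $(\tilde\epsilon,\tilde\eta,\tilde\eta,\tilde\delta,\tilde\delta')$ chosen strictly tighter than the target $(\epsilon,\eta,\eta',\delta,\delta')$. The Lemma produces a candidate subspace $\mathcal{M}$ with projector $P = \Upsilon_0^{\tilde\eta}\,\Upsilon_1^{\tilde\eta}\cdots\Upsilon_c^{\tilde\eta}$, approximately microcanonical for the $\bar Y_j$'s. The claim to establish is that the same $\mathcal{M}$ satisfies Definition~\ref{def:approx-microcanonical-subspace} for the $\bar Q_j$'s once $n$ is large enough to make $\epsilon_{\rm O}(n)$ smaller than the slack between the two parameter sets.

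Condition~\ref{item:Def1A} would be verified as follows. For any $\omega$ with $\operatorname{supp}(\omega)\subset\mathcal{M}$, the support is contained in the range of each $\Upsilon_j^{\tilde\eta}$, so the variance obeys $\Tr\bigl(\omega(\bar Y_j-v_j)^2\bigr)\leq (\tilde\eta\,\Sigma(\bar Y_j))^2$. The splitting $\bar Q_j-v_j = (\bar Y_j-v_j)+(\bar Q_j-\bar Y_j)$ combined with $\|\bar Q_j-\bar Y_j\|_\infty\leq\epsilon_{\rm O}(n)$ gives a bound $\Tr\bigl(\omega(\bar Q_j-v_j)^2\bigr)\leq (\tilde\eta\,\Sigma(\bar Y_j)+\epsilon_{\rm O}(n))^2$, and a Chebyshev-type inequality for the spectral projector $\Pi_j^\eta$ of $\bar Q_j$ then yields $\Tr(\omega\,\Pi_j^\eta)\geq 1-\delta$ once $\tilde\eta$ is chosen slightly below $\eta$ and $n$ large. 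Condition~\ref{item:Def1B} runs in parallel: a hypothesis $\Tr(\omega\,\Pi_j^{\eta'})\geq 1-\delta'$ translates, via the same operator-norm bound and a slightly enlarged window, into $\Tr(\omega\,\Upsilon_j^{\tilde\eta''})\geq 1-\tilde\delta''$ with $\tilde\eta''>\eta'$ and $\tilde\delta''>\delta'$, whereupon the corresponding conclusion of Lemma~\ref{lemma:MicroSubspaceCom} supplies $\Tr(\omega P)\geq 1-\tilde\epsilon$, which is at least $1-\epsilon$.

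The main obstacle will be the parameter bookkeeping: one must simultaneously choose the tighter Lemma parameters, the Ogata tolerance $\epsilon_{\rm O}(n)$, the block size $n$, and the block number $m$, so that Lemma~\ref{lemma:MicroSubspaceCom} is applicable on $\mathcal{K}^{\otimes m}$, both Chebyshev transfers close with the required margins, and $N=nm$ can be driven arbitrarily large. A closely related technical subtlety is that operator-norm closeness of $\bar Q_j$ and $\bar Y_j$ does \emph{not} give a direct inclusion of their spectral projectors, so the variance/Chebyshev route (or, alternatively, a Davis--Kahan-type perturbation argument) is essential---one cannot simply identify $\Pi_j^\eta$ with $\Upsilon_j^{\tilde\eta}$ on the nose. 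Once the parameters are aligned, the existence of $\mathcal{M}$ with the desired $(\epsilon,\eta,\eta',\delta,\delta')$ follows for all sufficiently large $N$.
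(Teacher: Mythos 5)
Your overall architecture matches the paper's: reduce to the commuting case via Ogata's theorem, invoke Lemma~\ref{lemma:MicroSubspaceCom} for the commuting family, then transfer the two defining properties from $\bar Y_j$-projectors to $\bar Q_j$-projectors using the operator-norm closeness $\|\bar Q_j - \bar Y_j\|_\infty \leq \epsilon_{\rm O}(n)$. The divergence—and the gap—is in the transfer mechanism.

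Your Chebyshev/variance transfer is quantitatively too weak to deliver the theorem's free parameter ranges. For Condition~\ref{item:Def1A}, your bound is
\begin{equation}
  \Tr\bigl(\omega[\id-\Pi_j^\eta]\bigr) \;\leq\; \frac{\bigl(\tilde\eta\,\Sigma(\bar Y_j) + \epsilon_{\rm O}\bigr)^2}{\bigl(\eta\,\Sigma(Q_j)\bigr)^2} \;\approx\; \Bigl(\frac{\tilde\eta}{\eta}\Bigr)^2 + O(\epsilon_{\rm O}).
\end{equation}
To force this below $\delta$ you need $\tilde\eta \lesssim \sqrt{\delta}\,\eta$, not ``$\tilde\eta$ slightly below $\eta$'' as you write: taking $n$ large only kills the $O(\epsilon_{\rm O})$ term, not the $(\tilde\eta/\eta)^2$ term. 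For Condition~\ref{item:Def1B} the same analysis runs in reverse and requires $\tilde\eta \gtrsim \eta'/\sqrt{\tilde\delta'}$ (with $\tilde\delta'$ constrained by $(c+1)\tilde\delta' \leq \epsilon$). Combining the two demands $\eta'/\sqrt{\tilde\delta'} \lesssim \tilde\eta \lesssim \sqrt\delta\,\eta$, i.e.\ $(\eta')^2 \lesssim \delta\,\tilde\delta'\,\eta^2$. The theorem, however, only assumes $\eta > \eta' > 0$ and $\delta > 0$ independently, so for, say, $\eta' = \eta/2$ and small $\delta,\delta'$ there is no admissible $\tilde\eta$ and the bookkeeping cannot close. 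This is not a fixable detail of the constants: Chebyshev sees only the second moment of $\bar Q_j$, and a $\delta'$-weight tail sitting anywhere up to $\|Q_j\|_\infty$ away poisons the variance, so you fundamentally cannot get additive (rather than multiplicative) control on the failure probability.

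The paper's transfer mechanism is what makes the parameter matching work: it sandwiches the hard spectral indicator $\Pi_j^\eta$ between smooth cutoffs $f_{\eta_0,\eta_1}$ and uses the operator H\"older bound of Aleksandrov--Peller~\cite[Theorem 4.1]{AleksandrovP10} to get $\|f(\bar Q_j-v_j\id) - f(\bar Y_j-v_j\id)\|_\infty \leq C_{\rm AP}\lambda\,\epsilon_{\rm O}^{2/3}$. Choosing $\lambda = \epsilon_{\rm O}^{-1/3}$ gives both a window widening and an additive trace error of order $\epsilon_{\rm O}^{1/3}$, which vanish as $n\to\infty$ independently of $\delta, \delta', \eta, \eta'$. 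That independence is exactly what your variance argument lacks, and it is not recoverable by Davis--Kahan either: that requires a spectral gap, which $\bar Q_j$ and $\bar Y_j$ do not have. The nontrivial input you need is precisely the operator Lipschitz/H\"older theory—ordinary scalar Lipschitz continuity does \emph{not} transfer to operator arguments, which is why the exponent is $2/3$ rather than $1$ and why Aleksandrov--Peller is cited.
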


\begin{proof}
Let $\hat{\eta} = (\eta+\eta')/2$.
For a constant $C_{\rm AP} > 0$ to be determined later,
let $n$ be such that $\epsilon_{\rm O} = \epsilon_{\mathrm{O}}(n)$
from Ogata's result~\cite[Theorem~1.1]{Ogata11} is small enough so that
$\eta > \hat{\eta}+C_{\rm AP}\epsilon_{\rm O}^{1/3}$
and $\eta' < \hat{\eta}-C_{\rm AP}\epsilon_{\rm O}^{1/3}$,
as well as such that 
$\hat{\delta} = \delta - C_{\rm AP}\epsilon_{\rm O}^{1/3} > 0$
and such that 
$\hat{\delta}' = \delta' + C_{\rm AP}\epsilon_{\rm O}^{1/3} \leq \frac{\epsilon}{c+1}$.

Choose $m$ in Lemma~\ref{lemma:MicroSubspaceCom} large enough
such that an $(\epsilon,\hat{\eta},\hat{\eta}'{=}\hat{\eta},\hat{\delta},\hat{\delta}')$-approximate
microcanonical subspace $\mathcal{M} := \mathcal{M}_{\rm com}$ associated with the 
commuting $\tilde{Y}_j$ exists, with approximate expectation values $v_j$.

Let $\omega$ denote a state defined on $\mathcal{H}^{\otimes N}$.
We will show that,
if measuring the $\bar{Y}_j$'s of $\omega$ yields sharply peaked statistics,
measuring the $\bar{Q}_j$'s yields sharply peaked statistics.
Later, we will prove the reverse
(that sharply peaked $\bar{Q}_j$ statistics imply 
sharply peaked $\bar{Y}_j$ statistics).

Recall from Definition~\ref{def:approx-microcanonical-subspace} that
$\Pi^\eta_j$ denotes the projector onto 
the direct sum of the $\bar Q_j$ eigenstates 
associated with the eigenvalues in $[v_j-\eta\Sigma(Q_j),  \:  v_j+\eta\Sigma(Q_j)]$.
These eigenprojectors are discontinuous functions of the observables.
Hence we look for better-behaved functions.
We will approximate the action of $\Pi^\eta_j$ by using 
\begin{align}
   \label{eq:CtsFxn}
   f_{\eta_0,\eta_1}(x)  :=  \begin{cases}
               1,  &  x  \in  [ - \eta_0,  \eta_0 ]   \\
               0,  &  | x |  >  \eta_1
               \end{cases},
\end{align}
for $\eta_1>\eta_0>0$.
The Lipschitz constant of $f$ is bounded by
$\lambda := \frac{1}{\eta_1 - \eta_0} \in \mathbb{R}$.  

The operator
$f_{\eta_0\Sigma(Q_j),\eta_1\Sigma(Q_j)}(\bar Q_j - v_j\id)$
approximates the projector $\Pi_j^{\eta_0}$.
Indeed, as a matrix, $f_{\eta_0\Sigma(Q_j),\eta_1\Sigma(Q_j)}( \bar{Q}_j - v_j \id )$ 
is sandwiched between the projector $\Pi_j^{\eta_0}$,
associated with a width-$\eta_0$ interval around $v_j$,
and a projector $\Pi_j^{\eta_1}$ 
associated with a width-$\eta_1$ interval of eigenvalues.
$f_{\eta,\eta}$ is the indicator function on the interval $[-\eta,\eta]$.
Hence $\Pi_j^\eta = f_{\eta\Sigma(Q_j),\eta\Sigma(Q_j)}(\bar Q_j - v_j\id )$.
Similarly, we can regard 
$f_{\eta_0\Sigma(Q_j),\eta_1\Sigma(Q_j)}( \bar{Y}_j - v_j \id )$ as 
sandwiched between $\Upsilon_j^{\eta_0}$ 
and $\Upsilon_j^{\eta_1}$.

Because $\bar{Q}_j$ is close to $\bar{Y}_j$, $f( \bar{Q}_j )$ is close to $f ( \bar{Y}_j )$:
Let $n$ be large enough so that, by~\cite[Theorem 1.1]{Ogata11},
$\| \bar{Q}_j - \bar{Y}_j \|_\infty \leq \epsilon_\mathrm{O}$. 
By~\cite[Theorem 4.1]{AleksandrovP10},
\begin{multline}
   \label{eq:fsClose}
   \|  f_{ \eta_0\Sigma(Q_j),  \eta_1\Sigma(Q_j) }( \bar{Y}_j - v_j\id )   
      -   f_{ \eta_0\Sigma(Q_j),  \eta_1\Sigma(Q_j) } ( \bar{Q}_j - v_j\id )  \|_\infty
\\
  \leq  \kappa_\lambda,
\end{multline}
wherein $\kappa_\lambda  =  C_{\rm AP }  \lambda  \epsilon_\mathrm{O}^{2/3}$
and $C_{\rm AP}$ denotes a universal constant.
Inequality~\eqref{eq:fsClose} holds because 
$f$ is $\lambda$-Lipschitz and bounded, 
so the H\"{o}lder norm in~\cite[Theorem 4.1]{AleksandrovP10} 
is proportional to $\lambda$.

Let us show that, 
if measuring the $\bar{Y}_j$'s of $\omega$ yields sharply peaked statistics, 
then measuring the $\bar{Q}_j$'s yields sharply peaked statistics, 
and vice versa.
First, we choose $\eta_0=\eta$, $\eta_1=\eta + \epsilon_\mathrm{O}^{1/3}$, 
and $\lambda=\epsilon_\mathrm{O}^{-1/3}$ such that 
$\kappa   :=   \kappa_\lambda
=   C_\mathrm{AP}\epsilon_\mathrm{O}^{1/3}$.
By the ``sandwiching,''
\begin{align}
   \label{eq:Direc1A}
   \Tr   \left( \omega  \Pi_j^{\eta+\epsilon_\mathrm{O}^{1/3}} \right)   
   & \geq   \Tr \left(  \omega  f_{\eta_0\Sigma(Q_j),\eta_1\Sigma(Q_j)} 
                      \left[ \bar{Q}_j   -  v_j  \id \right]  \right).
\end{align}
To bound the right-hand side, we invoke Ineq.~\eqref{eq:fsClose}:
\begin{align}
   \kappa  
   &  \geq   \|  f_{ \eta_0\Sigma(Q_j),  \eta_1\Sigma(Q_j) }( \bar{Y}_j - v_j\id )   
                     \nonumber   \\ & \qquad 
                     -   f_{ \eta_0\Sigma(Q_j),  \eta_1\Sigma(Q_j) } ( \bar{Q}_j - v_j\id )  \|_\infty  \\
   &  \geq  \Tr \Big(  f_{ \eta_0\Sigma(Q_j),  \eta_1\Sigma(Q_j) }( \bar{Y}_j - v_j\id )   
                               \nonumber   \\ & \qquad 
                     -   f_{ \eta_0\Sigma(Q_j),  \eta_1\Sigma(Q_j) } ( \bar{Q}_j - v_j\id ) \Big)   \\
   &  \geq  \Tr \Big(  \omega  \Big[  
                               f_{ \eta_0\Sigma(Q_j),  \eta_1\Sigma(Q_j) }( \bar{Y}_j - v_j\id )   
                               \nonumber   \\ & \qquad 
                     -   f_{ \eta_0\Sigma(Q_j),  \eta_1v } ( \bar{Q}_j - v_j\id ) 
                               \Big]  \Big).
\end{align}
Upon invoking the trace's linearity, we rearrange terms:
\begin{align}
   \Tr  \Big(  & \omega   f_{ \eta_0\Sigma(Q_j),  \eta_1\Sigma(Q_j) } ( \bar{Q}_j - v_j\id )   \Big)
   \\ &   \geq 
   \Tr  \Big(  \omega   f_{ \eta_0\Sigma(Q_j),  \eta_1\Sigma(Q_j) }( \bar{Y}_j - v_j\id )   \Big)
   -   \kappa  \\
   &   \geq  \label{eq:Direc1B}
         \Tr  \left(  \omega  \Upsilon^\eta_j   \right)   -   \kappa.
\end{align}
The final inequality follows from the ``sandwiching'' property of 
$f_{ \eta_0,  \eta_1 }$.
Combining Ineqs.~\eqref{eq:Direc1A} and~\eqref{eq:Direc1B} yields
a bound on fluctuations in $\bar{Q}_j$ measurement statistics
in terms of fluctuations in $\bar{Y}_j$ statistics:
\begin{align}
   \label{eq:tromegaPij-geq-tromegaUpsilonj}
   \Tr  \left( \omega  \,  \Pi_j^{\eta+\epsilon_\mathrm{O}^{1/3}}   \right)
   \geq   \Tr  \left(  \omega    \Upsilon^\eta_j   \right)   -   \kappa.
\end{align}

Now, we bound fluctuations in $\bar{Y}_j$ statistics
with fluctuations in $\bar{Q}_j$ statistics.
If    $\eta_0   =   \eta-\epsilon_\mathrm{O}^{1/3}$;    $\eta_1=\eta$;
$\lambda=\epsilon_\mathrm{O}^{-1/3}$, as before, and
$\kappa=\kappa_\lambda=C_\mathrm{AP}\epsilon_\mathrm{O}^{1/3}$, then
\begin{align}
  \Tr   \left(\omega \Upsilon_j^{\eta}\right)
   \geq   \Tr  \left( \omega  \Pi_j^{\eta-\epsilon_\mathrm{O}^{1/3}}   \right)  -  \kappa.  
 \label{eq:tromegaUpsilonj-geq-tromegaPij}
\end{align}
Using Ineqs.~\eqref{eq:tromegaPij-geq-tromegaUpsilonj} and~\eqref{eq:tromegaUpsilonj-geq-tromegaPij}, 
we can now show that $\mathcal{M} := \mathcal{M}_\mathrm{com}^{\hat{\eta}}$ is an
approximate microcanonical subspace for the observables
$Q_j$ and the approximate charge values $v_j$.
In other words, $\mathcal{M}$ is an
approximate microcanonical subspace for the observables $\tilde{Q}_j$.

First, we show that $\mathcal{M}$ satisfies the first condition in Definition~\ref{def:approx-microcanonical-subspace}.
Recall that $\mathcal{M}_\mathrm{com}^\eta$ is an
$\left( \epsilon, \eta, \eta'{=}\eta, \delta, \delta'{=}\frac{\epsilon}{c} \right)$-approximate
microcanonical subspace for the observables $\tilde{Y}_j$ with the approximate charge values $v_j$,
for all $\epsilon,\eta,\delta>0$ and for large enough $m$ (Lemma~\ref{lemma:MicroSubspaceCom}).
Recall that $N=nm$.
Choose $\delta   =   \hat\delta-\kappa>0$.
Let $\omega$ denote any state, defined on $\mathcal{H}^{\otimes N}$, 
whose support lies in
$\mathcal{M}  =  \mathcal{M}_\mathrm{com}^{\eta}$. 
Let   $\hat\eta = \eta+\epsilon_\mathrm{O}^{1/3}$. 
By the definitions of $\omega$ and $\mathcal{M}$, 
$\Tr    \left(  \omega \Upsilon_j^{\eta}   \right)   =   1   \geq 1-\delta$.
By Ineq.~\eqref{eq:tromegaPij-geq-tromegaUpsilonj}, therefore,
\begin{align}
   \Tr   \left(   \omega   \Pi_j^{\hat\eta}   \right) 
   \geq \Tr  \left(   \omega\Upsilon_j^{\eta}   \right)  -   \kappa 
   \geq    1   -   \delta-\kappa 
   =   1 - \hat\delta.
\end{align}
Hence $\mathcal{M}$ satisfies Condition~\ref{item:Def1A} in Definition~\ref{def:approx-microcanonical-subspace}.

To show that $\mathcal{M}$ satisfies Condition~\ref{item:Def1B}, let
$\hat\eta'=\eta - \epsilon_\mathrm{O}^{1/3}$, and let
$\hat\delta'
=   \delta'-   \kappa   
=   \frac{\epsilon}{c}   -   C_{\mathrm{AP}}\epsilon_\mathrm{O}^{1/3} 
>0$. 
Let   $\omega$ in $\mathcal{H}^{\otimes N}$ satisfy
$\Tr   \left(   \omega \Pi_j^{\hat\eta'}   \right)   \geq 1-\hat\delta'$ for all $j$. 
By Ineq.~\eqref{eq:tromegaUpsilonj-geq-tromegaPij},
\begin{align}
   \Tr \left(   \omega\Upsilon_j^{\eta}   \right)   
   \geq 1 - \hat\delta'-\kappa
   =1-\delta'.
\end{align}
By Condition~\ref{item:Def1B} in the definition of $\mathcal{M}_\mathrm{com}^\eta$,
therefore, at least fraction $1 - \epsilon$ of the probability weight of $\omega$ 
lies in $\mathcal{M}_\mathrm{com}^\eta   =   \mathcal{M}$:
$\Tr   \left(   \omega P_\mathrm{com}   \right)   \geq 1   -   \epsilon$. 
As $\mathcal{M}$ satisfies Condition~\ref{item:Def1B}, 
$\mathcal{M}$ is an $(\epsilon,\hat\eta,\hat\eta',\hat\delta,   \hat\delta')$-approximate microcanonical subspace.
\end{proof}

This derivation confirms physically the information-theoretic 
maximum-entropy derivation. By ``physically,'' we mean, 
``involving the microcanonical form of a composite system's state
and from the tracing out of an environment.''
The noncommutation of the charges $Q_j$ required us to define 
an approximate microcanonical subspace $\mathcal{M}$.
The proof of the subspace's existence, under appropriate conditions,
crowns the derivation.

The physical principle underlying this derivation
is, roughly, the Correspondence Principle.
The $Q_j$'s of one copy of the system $\mathcal{S}$ fail to commute with each other.
This noncommutation constitutes quantum mechanical behavior.
In the many-copy limit, however,
averages $\bar{Q}_j$ of the $Q_j$'s 
are approximated by commuting $\bar{Y}_j$'s, whose
existence was proved by Ogata~\cite{Ogata11}.
In the many-copy limit,
the noncommuting (quantum) problem reduces approximately
to the commuting (classical) problem.

We stress that the approximate microcanonical subspace $\mathcal{M}$
corresponds to a set of observables $Q_j$ and a set of values $v_j$.
Consider the subspace $\mathcal{M}'$ associated with
a subset of the $Q_j$'s and their $v_j$'s.
This $\mathcal{M}'$ differs from $\mathcal{M}$.
Indeed, $\mathcal{M}'$ typically has a greater dimensionality than $\mathcal{M}$,
because fewer equations constrain it.
Furthermore, consider a linear combination 
$Q' = \sum_{j=0}^c \mu_j Q_j$. 
The average $\bar{Q'}$ of $N$ copies of $Q'$
equals $\sum_{j=0}^c \mu_j \bar{Q}_j$.
The approximate microcanonical subspace $\mathcal{M}$ of the whole
set of $Q_j$'s has the property that 
all states that lie mostly on it 
have sharply defined values near 
$v' = \sum_{j=0}^c \mu_j v_j$.
Generally, however, our $\mathcal{M}$ is not 
an approximate microcanonical subspace for $Q'$, or a selection
of $Q'$, $Q''$, etc., unless these primed operators span the same set of observables as the $Q_j$'s.

%
%
%
%
\section{Dynamical considerations}
\label{section:SI_Dynamics}

Inequality~\eqref{maintext-eq:typical-2} of the main text is derived as follows:
Let us focus on $\|  \rho_\ell  -  \gamma_{ \mathbf{v} }  \|_1$.
Adding and subtracting $\Omega_\ell$ to the argument,
then invoking the Triangle Inequality, yields
\begin{align}
   \| \rho_\ell  -  \gamma_{ \mathbf{v} } \|_1
   \leq   \|  \rho_\ell  -  \Omega_\ell \|_1  +  \|  \Omega_\ell  -  \gamma_{ \mathbf{v} }  \|_1.
\end{align}
We average over copies $\ell$ 
and average (via $\langle . \rangle$) over pure whole-system states $\ket{ \psi }$.
The first term on the right-hand side is bounded 
in Ineq.~\eqref{maintext-eq:typical} of the main text:
\begin{align}
   \label{eq:Typical1}
   \biggl\langle    \frac1N    \sum_{\ell=0}^{N-1} 
         \| \rho_\ell   -    \gamma_{\mathbf{v}}  \|_1    \biggr\rangle
   \leq   
          \frac{d}{ \sqrt{D_M} }
          +   \left\langle   \frac1N    \sum_{\ell=0}^{N-1}
           \|  \Omega_\ell  -  \gamma_{ \mathbf{v} }  \|_1   \right\rangle.
\end{align}
To bound the final term, we invoke Pinsker's Inequality [Ineq.~\eqref{eq:PinskerIneq-2}],
$\|  \Omega_\ell  -  \gamma_{ \mathbf{v} }  \|_1
\leq   \sqrt{ 2 D( \Omega_\ell  ||  \gamma_{ \mathbf{v} } ) }$.
Averaging over $\ell$ and over states $\ket{\psi}$ yields
\begin{align}
   \left\langle   \frac1N    \sum_{\ell=0}^{N-1}
      \|  \Omega_\ell  -  \gamma_{ \mathbf{v} }  \|_1
      \right\rangle
   & \leq   \left\langle   \frac1N    \sum_{\ell=0}^{N-1}
            \sqrt{ 2 D( \Omega_\ell  ||  \gamma_{ \mathbf{v} } ) }
            \right\rangle   \\
   &  \leq  \left\langle   
               \sqrt{ \frac2N    \sum_{\ell=0}^{N-1}
               D( \Omega_\ell  ||  \gamma_{ \mathbf{v} } ) }
               \right\rangle,
\end{align}
wherein $D$ denotes the relative entropy.
The second inequality follows from the square-root's concavity.
Let us double each side of Ineq.~\eqref{eq:AvgRelEnt},
then take the square-root:
\begin{align}
   \sqrt{ \frac{2}{N} \sum_{ \ell = 0}^{ N - 1} 
    D ( \Omega_\ell \| \gamma_{ \mathbf{v} } ) }
    \leq  \sqrt{ 2 (\theta + \theta') }.
\end{align}
Combining the foregoing two inequalities,
and substituting into Ineq.~\eqref{eq:Typical1}, yields Ineq.~\eqref{maintext-eq:typical-2} of the main text.

%
%
%
%
\section{Derivation from complete passivity and resource theory}
  \label{section:SI_RT}

An alternative derivation of 
the thermal state's form relies on {complete passivity}.
One cannot extract work from any number of copies of 
the thermal state via any energy-preserving unitary~\cite{PuszW78,Lenard78}.  
We adapt this argument to noncommuting conserved charges.
The {\GGSlong} is shown to be the completely passive ``free'' state 
in a  thermodynamic resource theory.

%

{Resource theories} are models, developed in quantum information theory, for scarcity.
Using a resource theory, one can calculate the value attributable to a quantum state
by an agent limited to performing only certain operations,
called ``free operations.''
The first resource theory described pure bipartite entanglement~\cite{HorodeckiHHH09}.
Entanglement theory concerns how one can manipulate entanglement,
if able to perform only local operations and classical communications.
The entanglement theory's success led to 
resource theories for asymmetry~\cite{BartlettRS07}, for stabilizer codes in quantum computation~\cite{VeitchMGE14}, for coherence~\cite{WinterY15}, for quantum Shannon theory~\cite{DevetakHW05}, and for thermodynamics, amongst other settings.

Resource-theoretic models for heat exchanges were constructed recently~\cite{janzing_thermodynamic_2000,FundLimits2}.
The free operations, called ``thermal operations,''
conserve energy.
How to extend the theory to other conserved quantities was noted in~\cite{FundLimits2}.
The commuting-observables version of the theory was defined and analyzed in~\cite{YungerHalpernR14,YungerHalpern14},
which posed questions about modeling noncommuting observables.
We extend the resource theory 
to model thermodynamic exchanges of noncommuting observables.
The free operations that define this theory,
we term ``\GTOlong'' (\GTO).
This resource theory is related to that in~\cite{Imperial15}.
We supplement earlier approaches with 
a work payoff function,
as well as with a reference frame associated with a non-Abelian group.

This section is organized as follows.
First, we introduce three subsystems and define work.
Next, we define \GTO.
The \GTO{} resource theory leads to the \GGS{} via two routes:
\begin{enumerate}
   \item 
   The \GGS{} is completely passive:
   The agent cannot extract work from
   any number of copies of $\gamma_{ \mathbf{v}}$.

   \item 
   The \GGS{} is the state preserved by \GTO{}, the operations that require no work.
\end{enumerate}
The latter condition leads to ``second laws'' for thermodynamics 
that involves noncommuting conserved charges.
The second laws imply the maximum amount of work extractable
from a transformation between states.

%
%
%
%
\paragraph{Subsystems:}
To specify a physical system in this resource theory, one specifies
a Hilbert space, a density operator, a Hamiltonian,
and operators that represent the system's charges.
To specify the subsystem $S$ of interest, for example,
one specifies a Hilbert space $\mathcal{H}$;
a density operator $\rho_\sys$; a Hamiltonian $H_\sys$;
and charges $Q_{1_\sys}, \ldots, Q_{c_\sys}$.

Consider the group $G$ formed from elements of the form 
$e^{i  \boldsymbol{\mu}   \cdot   \mathbf{Q}}$.
  Each $Q_j$ can be viewed as a generator.  $G$ is non-Abelian if the $Q_j$'s fail to
commute with each other.  Following~\cite{kitaev2014super}, we assume that $G$ is a
compact Lie group.
The compactness assumption is satisfied if the system's Hilbert space is 
finite-dimensional.
(We model the reference frame's Hilbert space as infinite-dimensional for convenience.
Finite-size references can implement the desired protocols with arbitrary
fidelity~\cite{kitaev2014super}.)

We consider three systems, apart from $S$:
First, $R$ denotes a reservoir of free states.
The resource theory is nontrivial, we prove, if and only if
the free states have the \GGS{}'s form.
Second, a battery $W$ stores work.
$W$ doubles as a non-Abelian reference frame.
Third, any other ancilla is denoted by $A$.

The Hamiltonian
$H_{\rm tot} := H_\sys   +   H_\bath   +   H_{ \batt }+H_\anc$
governs the whole system.
The $j^{\rm th}$ whole-system charge has the form
$Q_{j_{\rm tot}} := Q_{j_\sys}  +  Q_{j_\bath}  +  Q_{j_{ \batt }}  +  Q_{j_\anc}$.  
Let us introduce each subsystem individually.

\paragraph{Battery:}
We define work by modeling the system that stores the
work. In general, the mathematical expression for thermodynamic work
depends on which physical degrees of freedom a system has.
A textbook example concerns a gas,
subject to a pressure $p$,
whose volume increases by an amount $dV$.
The gas performs an amount $dW  =  p \, dV$ of work.
If a force $F$ stretches a polymer through a displacement $dx$,
$dW  =  - F \, dx$.
If a material's magnetization decreases by an amount $dM$
in the presence of a strength-$B$ magnetic field,
$dW = B\,dM$.
%


We model the ability to convert,
into a standard form of work, 
a variation in some physical quantity. 
The model consists of an observable called a ``payoff function.''  
The payoff function is defined as
\begin{align}
  \workf := \sum_{j=0}^c\mu_j Q_{j}\ .
  \label{eq:SuppInf-workfunc}
\end{align}
We generally regard the payoff function as an observable of the battery's.
We can also consider the $\workf$ of the system of interest. 
If the system whose $\workf$ we refer to is not obvious from context, 
we will use a subscript.
For example, $\workf_\batt$ denotes the battery's work function. 

One might assume that the battery exchanges only finite amounts of charges.
Under this assumption,
a realistically sized battery can implement the desired protocols with perfect fidelity~\cite{kitaev2014super}.

%
%
%
%
\paragraph{Work:}
We define as {average extracted work} $W$ 
the difference in expectation value of the payoff function $\workf$:
\begin{align}
  \label{eq:Work}
  W := \Tr   \left(\rho_{ \batt }' \workf\right) - \Tr   \left(\rho_{ \batt } \workf\right)\, .
\end{align}
The battery's initial and final states
are denoted by $\rho_{ \batt }$ and $\rho_{ \batt }'$.
If the expectation value increases, then $W>0$, and work
has been extracted from the system of interest.  
Otherwise, work has been expended.

We focus on the average work extracted in the asymptotic limit:
We consider processing many copies of the system,
then averaging over copies.
Alternatively, one could focus on 
one instance of the transformation.
The deterministic or maximal guaranteed work would quantify the protocol's efficiency
better than the average work would~\cite{dahlsten2011inadequacy,del2011thermodynamic,FundLimits2,aaberg2013truly}.

%
%
%
%
\paragraph{Reference frame:}
Reference frames have appeared in 
the thermodynamic resource theory for heat exchanges~\cite{BrandaoHORS13,aberg2014catalytic,korzekwa2015extraction}.
We introduce a non-Abelian reference frame
into the thermodynamic resource theory for noncommuting conserved charges.
Our agent's reference frame
carries a representation of the $G$ associated with the charges~\cite{kitaev2014super,BRS-refframe-review}.

The reference frame expands the set of allowed operations
from a possibly trivial set. 
A superselection rule restricts the free operations, as detailed below.
Every free unitary $U$ conserves (commutes with) each charge. 
The system charges $Q_{j_\sys}$ might not commute with each other.
In the worst case, 
the $Q_{j_\sys}$'s share no multidimensional eigensubspace.
The only unitary that conserves all such $Q_{j_\sys}$'s 
is trivial: $U  \propto  \id$.

A reference frame ``frees up'' dynamics,
enabling the system to evolve nontrivially.
A free unitary can fail to commute with a $Q_{j_\sys}$
while preserving $Q_{j_{\rm tot}}$.
This dynamics transfers charges between the system and the reference frame.

Our agent's reference frame doubles as the battery.
The reference frame and battery are combined for simplicity, 
to reduce the number of subsystems under consideration.

%
%
%
%
\paragraph{Ancillas:}
The agent could manipulate extra subsystems, called ``ancillas.''
A list $(\rho_\anc,   H_\anc,   Q_{1_\anc},   \ldots,   Q_{c_\anc})$
specifies each ancilla $A$.
Any ancillas evolve cyclically under free operations.
That is, \GTO{} preserve the ancillas' states, $\rho_\anc$.
If \GTO{} evolved ancillas acyclically, the agent could ``cheat,''
extracting work by degrading an ancilla~\cite{brandao2013second}.

Example ancillas include catalysts.
A {catalyst} facilitates a transformation 
that could not occur for free in the catalyst's absence~\cite{brandao2013second}.
Suppose that a state 
$S  =  (\rho_\sys,  H_\sys,  Q_{1_\sys},  \ldots,  Q_{c_\sys})$ cannot transform into a state 
$\tilde{S}  =  (\tilde{\rho}_\sys,  \tilde{H}_\sys,  \tilde{Q}_{1_\sys},  \ldots,  \tilde{Q}_{c_\sys})$ 
by free operations:
$S  \not\mapsto \tilde{S}$.
Some state $X  =  (\rho_\cat,  H_\cat,  Q_{1_\cat},  \ldots,  Q_{c_\cat})$ might enable 
$S  \otimes  X  \mapsto  \tilde{S}  \otimes  X$
to occur for free.
Such a facilitated transformation is called a ``catalytic operation.''

%
%
%
%
\paragraph{\GTOlong{}:}
\GTO{} are the resource theory's free operations.
\GTO{} model exchanges of heat and of charges that might not commute with each other.
\begin{definition}
\label{definition:GTO}
Every \GTOlongSing{} (\GTO{}) consists of the following three steps.
Every sequence of three such steps forms a \GTO{}:
\begin{enumerate}
\item 
\label{free}
Any number of free states $(\rho_\bath, H_\bath, Q_{1_\bath},  \ldots,  Q_{c_\bath})$ can be added.

\item 
         \label{carry-out-U}
         Any unitary $U$ that satisfies the following conditions
         can be implemented on the whole system:
	\begin{enumerate} 
	\item $U$ preserves energy: $[U,H_{\rm tot}]=0$. \label{econ}
	
	\item $U$ preserves every total charge: 
	         $[U,  Q_{j_{\rm tot}}]=0  \;  \: \forall j = 1,  \ldots, c$. \label{qcon}
	         		
	\item Any ancillas return to their original states:
	         $\Tr_{\backslash A}  (U\rho_\mathrm{tot} U^\dagger)   =   \rho_\anc$.
	\end{enumerate}
	
\item Any subsystem can be discarded (traced out). \label{trace}

\end{enumerate}
\end{definition}

Conditions~\ref{econ} and~\ref{qcon} ensure that 
the energy and the charges are conserved.
The allowed operations are $G$-invariant,
or symmetric with respect to the non-Abelian group $G$.
Conditions~\ref{econ} and~\ref{qcon} do not significantly restrict the allowed operations,
if the agent uses a reference frame. 
Suppose that the agent wishes to implement, on $S$,
some unitary $U$ that fails to commute with some $Q_{j_\sys}$. 
$U$ can be mapped to a whole-system unitary $\tilde{U}$
that conserves $Q_{j_{\rm tot}}$.
The noncommutation represents the transfer of charges to the battery,
associated with work.


The construction of $\tilde{U}$ from $U$
is described in~\cite{kitaev2014super}.
(We focus on the subset of free operations analyzed in~\cite{kitaev2014super}.)
Let $g, \phi  \in  G$ denote any elements of the symmetry group.
Let $T$ denote any subsystem (e.g., $T = S, W$).
Let $V_{\rm T}(g)$ denote a representation,
defined on the Hilbert space of system $T$, of $g$.
Let $\ket{ \phi }_{\rm T}$ denote a state of $S$
that transforms as the left regular representation of $G$:
$V_{\rm T}(g)  \ket{\phi }_{\rm T}  =  \ket{ g \phi }_{\rm T}$.
$U$ can be implemented on the system $S$ of interest
by the global unitary
\begin{align}
   \tilde{U}  :=  \int  d \phi  \:
   \ketbra{\phi}{\phi}_\batt    \otimes
   [ V_\sys ( \phi )  \,  U  \,  V_\sys^{-1} (\phi) ].
  \label{eq:kitaev-unitary-construction}
\end{align}

The construction~\eqref{eq:kitaev-unitary-construction} does not increase 
the reference frame's entropy 
if the reference is initialized to $\ket{\phi=1}_\batt$.  
This nonincrease keeps the extracted work ``clean''~\cite{Skrzypczyk13,aaberg-singleshot,BrandaoHNOW14}. 
No entropy is ``hidden'' in the reference frame $W$.
$W$ allows us to implement the unitary $U$, 
providing or storing the charges 
consumed or outputted by the system of interest.

%
%
%
%
\subsection{A zeroth law of thermodynamics: Complete passivity of the \GGSlong{}}

Which states $\rho_\bath$ should the resource-theory agent access for free?
The free states are the only states from which
work cannot be extracted via free operations.
We will ignore $S$ in this section,
treating the reservoir $R$ as the system of interest.

%
%
%
%
\paragraph{Free states in the resource theory for heat exchanges:}
Our argument about noncommuting charges will mirror 
the argument about extracting work 
when only the energy is conserved.
Consider the thermodynamic resource theory for energy conservation.
Let $H_\bath$ denote the Hamiltonian of $R$. 
The free state $\rho_\bath$ has the form
$\rho_\bath = e^{-\beta H_\bath}/Z$~\cite{BrandaoHNOW14,YungerHalpernR14}.  
This form follows from the canonical ensemble's completely passivity 
and from the nonexistence of any other completely passive state.
Complete passivity was introduced in~\cite{PuszW78,Lenard78}.

\begin{definition}[Passivity and complete passivity]
Let $\rho$ denote a state governed by a Hamiltonian $H$.
$\rho$ is passive with respect to $H$
if no free unitary $U$ 
can lower the energy expectation value of $\rho$:
\begin{align}
   \label{eq:Passive}
   \not\exists  \, U  \:  :  \:
   \Tr \left(  U \rho U^\dag \, H  \right)
   <  \Tr  \left( \rho H \right).
\end{align}
That is, work cannot be extracted from $\rho$
by any free unitary.
If work cannot be extracted from any number $n$
of copies of $\rho$, $\rho$ is completely passive with respect to $H$:
\begin{align}
   \label{eq:ComPassive}
    \forall n = 1, 2, \ldots,  \quad
    \not\exists  \, U  \:  :  \:
   \Tr \left(  U \rho^{ \otimes n} U^\dag \, H  \right)
   <  \Tr  \left( \rho^{ \otimes n} H \right).
\end{align}
\end{definition}
\noindent A free $U$ could lower the energy expectation value
only if the energy expectation value
of a work-storage system increased.
This transfer of energy
would amount to work extraction.

Conditions under which $\rho$ is passive
have been derived~\cite{PuszW78,Lenard78}:
Let $\lbrace p_i\rbrace$ and $\lbrace E_i\rbrace$ denote the eigenvalues of $\rho$ and $H$. 
$\rho$ is passive if
\begin{enumerate}
   \item
   $[\rho,H]=0$ and 
   \item
   $E_i > E_j$ implies that $p_i \leq p_j$   for all $i,j$.
\end{enumerate}
One can check that $e^{ - \beta H_\bath } / Z$ is completely passive
with respect to $H_\bath$.

No other states are completely passive (apart from the ground state).
Suppose that the agent could access 
copies of some $\rho_0  \neq  e^{ - \beta H_\bath } / Z$.
The agent could extract work via thermal operations~\cite{brandao2013second}.
Free (worthless) states could be transformed into a (valuable) resource for free.
Such a transformation would be unphysical, 
rendering the resource theory trivial, in a sense.
(As noted in ~\cite{Lostaglio2015PRX_coherence}, if a reference frame is not allowed,
the theory might be nontrivial  in that 
creating superpositions of energy eigenstates would not be possible).


%
%
%
%
\paragraph{Free states in the resource theory of \GTOlong{}:}
We have reviewed the free states in the resource theory for heat exchanges.
Similar considerations characterize the resource theory for noncommuting charges $Q_j$.
The free states, we show, have the \GGS{}'s form.
If any other state were free,
the agent could extract work for free.

%
%
\begin{theorem} \label{zeroth}
%
There exists an $m > 0$ such that 
a \GTO{} can extract a nonzero amount of chemical work from
$(\rho_\bath)^{\otimes m}$
if and only if  
$\rho_\bath \neq e^{-\beta\,( H_\bath +\sum_j \mu_j  Q_{j_\bath} )}/Z$  
for some $\beta  \in  \mathbb{R}$.
\end{theorem}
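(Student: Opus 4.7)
My plan is to prove both directions by reducing to the classical Pusz--Woronowicz--Lenard theorem of complete passivity, taking the ``effective Hamiltonian'' of the reservoir to be the reservoir's work function $\workf_\bath := H_\bath + \sum_{j=1}^c \mu_j Q_{j_\bath}$. With the convention $\mu_0 = 1$ and $Q_0 = H$, this is exactly $\sum_{j=0}^c \mu_j Q_{j_\bath}$, so the hypothesised GGS form $e^{-\beta(H_\bath + \sum_j \mu_j Q_{j_\bath})}/Z$ is the Gibbs state of $\workf_\bath$ at inverse temperature $\beta$.

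For the direction ``$\rho_\bath$ is a GGS $\Rightarrow$ no work,'' I would fix $\rho_\bath = e^{-\beta\workf_\bath}/Z$ and consider any \GTO\ unitary $U$ on $\bath^{\otimes m}\otimes\batt\otimes\anc$. Since $U$ commutes with every $Q_{j_\text{tot}}$ it commutes with $\workf_\text{tot} := \sum_j \mu_j Q_{j_\text{tot}}$; combined with ancilla cyclicity and the absence of a separate system $\sys$ in the hypothesis, this forces $W = \Delta\langle\workf_\batt\rangle = -\Delta\langle\workf_\bath^{(m)}\rangle$, where $\workf_\bath^{(m)} := \sum_{\ell=0}^{m-1}\id^{\otimes\ell}\otimes\workf_\bath\otimes\id^{\otimes(m-1-\ell)}$. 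Because $\rho_\bath^{\otimes m}$ is precisely the Gibbs state of $\workf_\bath^{(m)}$, the standard free-energy trick --- nonnegativity of $D(\rho_\bath'\|\rho_\bath^{\otimes m})$ for the post-protocol marginal $\rho_\bath'$ combined with unitary invariance of the joint entropy and the reservoir--battery decoupling assumed for clean work extraction --- yields $\Delta\langle\workf_\bath^{(m)}\rangle \geq 0$, hence $W \leq 0$.

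For the converse ``$\rho_\bath$ not a GGS $\Rightarrow$ work extractable for some $m$,'' the non-Abelian reference frame inside the battery plays the key role. By Eq.~\eqref{eq:kitaev-unitary-construction}, any unitary $V$ on $\bath^{\otimes m}$ can be implemented to arbitrary fidelity by a global unitary that conserves each $Q_{j_\text{tot}}$, the battery absorbing any change in charges; a strict decrease $\Tr(\workf_\bath^{(m)}\rho_\bath^{\otimes m}) - \Tr(\workf_\bath^{(m)} V\rho_\bath^{\otimes m}V^\dagger) > 0$ then translates directly into strictly positive work. The problem thus reduces to exhibiting, for some $m$, a $V$ on $\bath^{\otimes m}$ that strictly lowers $\langle\workf_\bath^{(m)}\rangle$. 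Since $\workf_\bath$ is a single Hermitian observable with a well-defined spectrum, Lenard's theorem supplies such a $V$ via eigenvalue rearrangements across copies, unless the eigenvalues of $\rho_\bath$ in a basis diagonalising $\workf_\bath$ are exponentially decreasing in those of $\workf_\bath$, i.e.\ unless $\rho_\bath$ is itself a Gibbs state of $\workf_\bath$.

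The main obstacle is that Lenard's classical argument presupposes a common eigenbasis of $\rho_\bath$ and $\workf_\bath$, whereas in the noncommuting setting $\rho_\bath$ may have coherences in the $\workf_\bath$-eigenbasis. I would handle this in two steps: first, complete dephasing of $\rho_\bath$ in the $\workf_\bath$-eigenbasis is free in the \GTO\ theory --- it is the $t$-averaged mixture of the unitaries $e^{it\workf_\bath}$, each implementable via~\eqref{eq:kitaev-unitary-construction}, and it preserves $\langle\workf_\bath\rangle$ because each $e^{it\workf_\bath}$ commutes with $\workf_\bath$. If the dephased state $\rho_\bath^{\mathrm{deph}}$ is not the Gibbs state of $\workf_\bath$, Lenard's theorem applied to $(\rho_\bath^{\mathrm{deph}})^{\otimes m}$ furnishes the work-extracting $V$. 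If instead $\rho_\bath^{\mathrm{deph}}$ already equals that Gibbs state, then $\rho_\bath$ differs from the GGS only by coherences in the $\workf_\bath$-eigenbasis, which are themselves a work resource extractable with the reference frame by the coherent-work protocols of~\cite{aberg2014catalytic,korzekwa2015extraction}. A secondary technical point is propagating the approximation error in~\eqref{eq:kitaev-unitary-construction} so that the net extracted work remains strictly positive, which follows by taking a sufficiently coherent reference state $\ket{\phi{=}1}_\batt$ as in~\cite{kitaev2014super}.
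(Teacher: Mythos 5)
Your proposal follows the paper's own route: treat $\workf_\bath = \sum_{j=0}^c \mu_j Q_{j_\bath}$ as an effective Hamiltonian, reduce complete passivity of $\rho_\bath$ with respect to $\workf_\bath$ to the Pusz--Woronowicz--Lenard theorem, and use the reference-frame construction of Eq.~\eqref{eq:kitaev-unitary-construction} to lift the work-extracting unitary on $\bath^{\otimes m}$ to a global unitary conserving every $Q_{j_{\rm tot}}$. One quibble on the converse direction: the two-case split via dephasing rests on a mistaken premise. Lenard's theorem does not \emph{presuppose} $[\rho_\bath,\workf_\bath]=0$; rather, commutation is part of the \emph{conclusion} (a necessary condition for passivity), so if $\rho_\bath$ has coherences in the $\workf_\bath$-eigenbasis it is already not passive and a single-copy unitary (lifted via the reference frame) extracts work directly, with no need to invoke separate coherent-work protocols. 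Your detour still reaches the right conclusion, but it is superfluous and slightly misrepresents the scope of Lenard's result.
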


\begin{proof}
We borrow from~\cite{pusz_passive_1978,Lenard78}
the proof that canonical-type states,
and only canonical-type states,
are completely passive.
We generalize complete passivity with respect to a Hamiltonian $H$
to complete passivity with respect to the work function $\workf$.

Every free unitary preserves every global charge. 
Hence the lowering of the expectation value
of the work function $\workf$ of a system
amounts to transferring work from the system to the battery:
\begin{align}
   \label{eq:Preserve}
   \Delta  \Tr (\workf_{ \batt }   \rho_{ \batt })   
   =   -  \Delta \Tr (\workf_\bath   \rho_\bath ).
\end{align}
Just as $e^{ - \beta H } / Z$ is completely passive with respect to $H$~\cite{pusz_passive_1978,Lenard78},
the \GGS{} is completely passive with respect to $\workf_{ \bath }$ for some $\beta$.

Conversely, if $\rho_\bath$ is not of the \GGS{} form, 
it is not completely passive with respect to $\workf_\bath$. 
Some unitary $U_{\bath^{\otimes m}}$ 
lowers the energy expectation value of $\rho_\bath^{\otimes m}$,
$\Tr    ( U_{\bath^{\otimes m}}   [\rho_\bath^{\otimes m} ]
        U_{\bath^{\otimes m}}^\dagger \workf_{\bath^{\otimes m}} )
< \Tr    ( \rho_{\bath}^{\otimes m}\workf_{\bath^{\otimes m}} )$,
for some great-enough $m$.
A joint unitary defined on $R^{\otimes m}$ and $W$ 
approximates $U_{R^{\otimes m}}$ well 
and uses the system $W$ as a reference frame
[Eq.~\eqref{eq:kitaev-unitary-construction}].  
This joint unitary conserves every global charge.  
Because the expectation value of $\workf_{\bath^{\otimes m}}$ decreases, 
chemical work is transferred to the battery.
\end{proof}

%

%
%
%
%
The \GGS{} is completely passive with respect to $\workf_\bath$
but not necessarily with respect to each charge $Q_j$.
The latter lack of passivity was viewed as problematic in~\cite{Imperial15}.
The lowering of the \GGS{}'s $\<Q_j\>$'s
creates no problems in our framework, 
because free operations cannot lower the \GGS{}'s $\<\workf\>$.
The possibility of extracting charge of a desired type $Q_j$,
rather than energy, is investigated also in~\cite{teambristol}.

For example,
let the $Q_j$'s be the components $J_j$ of the spin operator $\mathbf{J}$.
Let the $z$-axis point in the direction of $\boldsymbol{\mu}$, 
and let $\mu_z>0$:
\begin{align}
   \sum_{j = 1}^3   \mu_j  J_j
   \equiv   \mu_z  J_z.
\end{align}
The \GGS{} has the form
$\rho_\bath  =  e^{ - \beta (  H_{ \bath }  -  \mu_z J_{z_{ \bath }} ) } / Z$.
This $\rho_\bath$ shares an eigenbasis with $J_{z_{ \bath }}$.
Hence the expectation value of the battery's $J_x$ charge vanishes:
 $\Tr ( \rho_\bath   J_{x_{ \bath }} )   =   0$.
 A free unitary, defined on $R$ and $W$,
can rotate the spin operator that appears in the exponential of $\rho_\bath$.
Under this unitary, the eigenstates of $\rho_\bath$ 
become eigenstates of $J_{x_{ \bath }}$.
$\Tr ( J_x   \rho_\bath )$ becomes negative;
work appears appears to be extracted ``along the $J_x$-direction''
from $\rho_\bath$.
Hence the \GGS{} appears to lack completely passivity.
The unitary, however, extracts no chemical work: 
The decrease in
$\Tr(\rho_\bath J_{x_{ \bath }} )$ is compensated for by an increase in
$\Tr ( \rho_{ \bath } J_{z_{ \bath }} )$.

Another example concerns the charges $J_i$ and
$\rho_\bath  =  e^{ - \beta (  H_{\bath}  -  \mu_z J_{z_{ \bath }} ) } / Z$.
No amount of the charge $J_z$ can be extracted from $\rho_\bath$.
But the eigenstates of $-J_z$ are inversely populated:
The eigenstate $\ket{ z }$ 
associated with the low eigenvalue $-\frac{\hbar}{2}$ of $-J_z$ 
has the small population $e^{- \beta \hbar / 2}$.
The eigenstate $\ket{ -z }$
associated with the large eigenvalue $\frac{\hbar}{2}$ of $-J_z$
has the large population $e^{ \beta \hbar / 2}$.
Hence the charge $-J_z$ can be extracted from $\rho_\bath$.
This extractability does not prevent $\rho_\bath$ from being completely passive,
according our definition.
Only the extraction of $\mathcal{W}$ corresponds to chemical work.
The extraction of just one charge does not.

The interconvertibility of types of free energy
associated with commuting charges
was noted in~\cite{YungerHalpern14}.
Let $Q_1$ and $Q_2$ denote commuting charges, and let
$\rho_\bath   =   e^{-\beta(H_{ \bath }  -  \mu_1 Q_{1_{ \bath }} -  \mu_2 Q_{2_{ \bath }})}$.
One can extract $Q_1$ work at the expense of $Q_2$ work, 
by swapping $Q_1$ and $Q_2$ 
(if an allowed unitary implements the swap).

%
%
%
%
\subsection{\GTOlong{} preserve the \GGSlong{}.}
The \GGS{}, we have shown, is the only completely passive state.
It is also the only state preserved by \GTO{}. 
\begin{theorem} \label{gibbspreserving}
Consider the resource theory, defined by \GTO{}, associated with a fixed $\beta$.
Let each free state be specified by
$(\rho_{ \bath },   H_\bath,   Q_{1_\bath},   \ldots,   Q_{c_\bath})$,
wherein $\rho_{ \bath }  :=  e^{-\beta\,( H_\bath   -  \sum_{j = 1}^c   \mu_j   Q_{j_\bath}  )}/Z$.
Suppose that the agent has access to the battery, 
associated with the payoff function~\eqref{eq:SuppInf-workfunc}.
The agent cannot, at a cost of $\langle   \mathcal{W}   \rangle   \leq 0$, 
transform any number of copies of free states into any other state.
In particular, the agent cannot change the state's $\beta$ or $\mu_j$'s.
\end{theorem}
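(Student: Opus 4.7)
The plan is to derive the statement from Theorem~\ref{zeroth} (complete passivity of the \GGS{}) together with the conservation structure built into every \GTO{}. Assuming by contradiction that some \GTO{} $\Phi$ implements $\rho_\bath^{\otimes n}\mapsto\sigma$ at battery cost $\langle\mathcal{W}\rangle\leq 0$, the strategy is to pin $\sigma$ first to a completely passive state (by a composition argument with Theorem~\ref{zeroth}) and then to the specific \GGS{} at the prescribed $(\beta,\mu_j)$ (by tracking $\langle\mathcal{W}\rangle$).

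First I would establish work neutrality. Theorem~\ref{zeroth} forbids extracting positive chemical work from the completely passive state $\rho_\bath^{\otimes n}$, so one has $\Delta\langle\mathcal{W}\rangle_\batt\leq 0$; combined with the hypothesis $\Delta\langle\mathcal{W}\rangle_\batt\geq 0$, this gives $\Delta\langle\mathcal{W}\rangle_\batt=0$. Because every \GTO{} unitary commutes with each $Q_{j_{\rm tot}}$, hence with $\mathcal{W}_{\rm tot}=\sum_j\mu_j Q_{j_{\rm tot}}$, and because the ancilla evolves cyclically, conservation of $\langle\mathcal{W}\rangle$ transfers to the reservoir: $\Tr(\sigma\,\mathcal{W}_{R^{\otimes n}})=n\,\Tr(\rho_\bath\,\mathcal{W}_\bath)$. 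Next I would show that $\sigma$ must itself be completely passive with respect to $\mathcal{W}$. Were it not, Theorem~\ref{zeroth} would supply an integer $k$ and a \GTO{} $\Psi$ extracting strictly positive work from $\sigma^{\otimes k}$; then $\Psi\circ\Phi^{\otimes k}$ would extract strictly positive work from $\rho_\bath^{\otimes nk}$, contradicting Theorem~\ref{zeroth} once more. The Pusz--Woronowicz--Lenard classification invoked in the proof of Theorem~\ref{zeroth} then yields $\sigma=(e^{-\beta'\mathcal{W}_\bath}/Z')^{\otimes n}$ for some $\beta'$. Finally, since $\beta\mapsto\Tr\!\bigl(e^{-\beta\mathcal{W}_\bath}\mathcal{W}_\bath/Z\bigr)$ has derivative $-\mathrm{Var}_{\gamma_\bath^{(\beta)}}(\mathcal{W}_\bath)<0$ whenever $\mathcal{W}_\bath$ is nontrivial, the conservation identity above forces $\beta'=\beta$, giving $\sigma=\rho_\bath^{\otimes n}$ and hence the claim, including the impossibility of altering $\beta$ or the $\mu_j$'s.

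The main obstacle is the classification step: Theorem~\ref{zeroth} is stated for a single system, whereas we need every completely passive state on $R^{\otimes n}$ with respect to the additive operator $\mathcal{W}_{R^{\otimes n}}=\sum_\ell\mathcal{W}_\bath^{(\ell)}$ to be a product \GGS{} at one inverse temperature. I would handle this by applying the argument of~\cite{PuszW78,Lenard78} directly to $R^{\otimes n}$, using the sum structure of $\mathcal{W}_{R^{\otimes n}}$ to force the single-temperature product form. A secondary subtlety is that a \GTO{} may append further free \GGS{}s and then discard subsystems, so the final $\sigma$ might live on $R^{\otimes n'}$ with $n'\neq n$; since every appended piece is itself a \GGS{} at the same $\beta$, both the conservation bookkeeping and the composition argument above carry over with $n$ replaced by the effective number of reservoir copies in the final state.
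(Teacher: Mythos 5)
Your proposal follows the paper's high-level strategy (contradiction, via Theorem~\ref{zeroth}) but supplies a step the paper's own proof glosses over, and in doing so is actually more careful than the published argument. The paper's proof simply asserts ``$\nats'$ is not completely passive'' and cites Theorem~\ref{zeroth} to extract work from copies of $\nats'$. But if $\nats'$ is a \GGS{} at the same $\mu_j$'s and some $\beta'\neq\beta$, Theorem~\ref{zeroth} as stated says that $\nats'$ \emph{is} completely passive, so the citation does not directly apply; the paper does not close this case even though the theorem statement singles out ``cannot change the state's $\beta$.'' Your two-stage argument --- first invoke complete passivity and the Pusz--Woronowicz--Lenard classification to force $\sigma$ into the one-parameter \GGS{} family, then use conservation of $\langle\mathcal{W}\rangle_{\rm tot}$, cyclicity of the ancilla, work neutrality, and the strict monotonicity of $\beta'\mapsto\Tr\bigl(e^{-\beta'\mathcal{W}_\bath}\mathcal{W}_\bath/Z\bigr)$ to pin $\beta'=\beta$ --- is a genuinely different and more complete route to the conclusion. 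What each buys: the paper's version is shorter and leans on the heat-engine intuition (a bath at $\beta'\neq\beta$ plus the free $\gamma_\beta$'s is a work source) without spelling it out; yours replaces that intuition with explicit $\langle\mathcal{W}\rangle$ bookkeeping.

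One residual gap in your write-up, however, concerns Step~3 of Definition~\ref{definition:GTO}: discarding a subsystem removes its $\langle\mathcal{W}\rangle$ from the ledger, so the equality $\Tr(\sigma\,\mathcal{W}_{R^{\otimes n'}})=n'\,\Tr(\rho_\bath\,\mathcal{W}_\bath)$ you rely on need not hold once tracing out occurs. Your ``secondary subtlety'' paragraph asserts the bookkeeping ``carries over,'' but does not justify why the discarded marginal should carry exactly the right $\langle\mathcal{W}\rangle$; a priori the discarded piece could carry off a surplus, leaving the kept state at a different effective $\beta'$. To close this, you would need to apply the complete-passivity argument jointly to the kept and discarded marginals (or argue that any $\langle\mathcal{W}\rangle$ carried away by the discarded part could be cheaply re-collected, contradicting passivity of $\gamma_\beta$), which your sketch does not yet do.
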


\begin{proof}
Drawing on Theorem~\ref{zeroth},
we prove Theorem~\ref{gibbspreserving} by contradiction.
Imagine that some free operation
could transform some number $m$ of copies
of $\nats := e^{-\beta\,(H_\bath   -   \sum_j\mu_j Q_{j_\bath})}/Z$
into some other state $\nats'$:
$\nats^{ \otimes m }  \mapsto  \nats'$.
($\nats'$ could have a different form from
  the \GGS{}'s.
  Alternatively, $\nats'$ could have the same form 
  but have different $\mu_j$'s or a different $\beta$.)
$\nats'$ is not completely passive.
Work could be extracted from 
some number $n$ of copies of $\nats'$,
by Theorem~\ref{zeroth}.
By converting copies of $\nats$ into copies of $\nats'$,
and extracting work from copies of $\nats'$,
the agent could extract work from $\nats$ for free.
But work cannot be extracted from $\nats$,
by Theorem~\ref{zeroth}.
Hence $\nats^{ \otimes m}$ must not be convertible into 
any $\nats' \neq \nats$,
for all $m = 1, 2, \ldots$.
\end{proof}

%
%
%
\paragraph{Second laws:}
Consider any resource theory defined by operations that preserve some state,
e.g., states of the form
$e^{-\beta\,( H_\bath   -  \sum_{j = 1}^c   \mu_j   Q_{j_\bath}  )}/Z$.
Consider any distance measure on states 
that is contractive under the free operations.
Every state's distance from the preserved state  $\rho_{ \bath }$
decreases monotonically under the operations. 
\GTO{} can be characterized with
any distance measure from $\rho_{ \bath }$
that is contractive under completely positive trace-preserving maps.
We focus on the R\'{e}nyi divergences,
extending the second laws developed in~\cite{brandao2013second}
for the resource theory for heat exchanges.

To avoid excessive subscripting, we alter our notation for the \GGS{}.
For any subsystem $T$,
we denote by $\gibbsParam{T}$ the \GGS{} relative 
to  the fixed $\beta$, to the fixed $\mu_j$'s,
and to the Hamiltonian $H_T$ and the charges $Q_{1_T}, \ldots, Q_{c_T}$
associated with $T$.
For example, 
$\gibbsSBatt  :=  e^{ - \beta [ (H_\sys  +  H_{ \batt }) 
+  \sum_{j = 1}^c  \mu_j  (Q_{j_\sys}  +  Q_{j_{ \batt }} ) ] }  / Z$
denotes the \GGS{} associated with the system-and-battery composite.

We define the generalized free energies 
\begin{equation}
   F_\alpha(\initial,   \gibbsS)   
   :=   k_{\rm B} T  D_\alpha(\initial   \|\gibbsS)   -   \kB T  \log (Z).
   \label{eq:genfree}
\end{equation}
The classical R{\'e}nyi divergences $D_\alpha(\initial\|\gibbsS)$ are defined as 
\begin{equation}
   D_\alpha(\initial   \|   \gibbsS)
   := \frac{\sgn(\alpha)}{\alpha-1} \log 
   \left(   \sum_k    p_k^\alpha    q_k^{1-\alpha}   \right),
   \label{eq:renyidivergence-2}
\end{equation}
wherein $p_k$ and $q_k$ denote the probabilities 
of the possible outcomes
of measurements of the work function $\workf$
associated with $\initial$ and with $\gibbsS$.
The state $\initial$ of $S$ is
compared with the \GGS{} associated with $H_\sys$ and with the $Q_{j_\sys}$'s.

The $F_\alpha$'s generalize the thermodynamic free energy.
To see how, we consider transforming $n$ copies $(\initial)^{\otimes n}$ of a state $\initial$.
Consider the asymptotic limit, similar to the thermodynamic limit,
in which $n \to \infty$.
Suppose that the agent has some arbitrarily small, nonzero probability $\varepsilon$
of failing to achieve the transformation.
$\varepsilon$ can be incorporated into any $F_\alpha$ via ``smoothing''~\cite{brandao2013second}.
The smoothed $F^\varepsilon_\alpha$ per copy of $\initial$
approaches $F_1$ in the asymptotic limit~\cite{brandao2013second}:
\begin{align}
  \lim_{n\rightarrow\infty}   \frac{1}{n}   
   F^\varepsilon_\alpha &  \Big(  
        (\initial)^{\otimes n},    ( \gibbsS )^{\otimes n}    \Big)
   =   F_1  ( \initial )   \\ 
  &  =  \< H_\sys  \>_{\initial}   -   TS(\initial)   
       +   \sum_{j = 1}^c   \mu_j   \<Q_{j_\sys}\>.
\end{align}
This expression resembles the definition
$F  :=  E - TS  +  \sum_{j = 1}^c  \mu_j  Q_j$
of a thermodynamic free energy $F$.
In terms of these generalized free energies, we formulate second laws.

%
%
\begin{proposition}
  In the presence of a heat bath 
  of inverse temperature $\beta$ and chemical potentials $\mu_j$, 
  the free energies   $F_\alpha(\initial,   \gibbsS)$ 
  decrease monotonically:
  \begin{align}
     F_{\alpha}(\initial,\gibbsS) \geq F_{\alpha}(\final,\gibbsS)
     \; \: \forall \alpha\geq 0,
  \end{align}
  wherein $\initial$ and $\final$ denote the system's initial and final states.
If    
\begin{align}
   &   [\workf_\sys,   \final]   = 0   \quad {\rm and}
   \nonumber \\ &  
   F_{\alpha}(\initial,  \gibbsS) \geq F_{\alpha}(\final,\gibbsS)
\;   \:   \forall   \alpha \geq 0,
\end{align}
some catalytic \GTO\ maps $\initial$ to $\final$.
\end{proposition}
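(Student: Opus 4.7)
The argument splits into the necessity (monotonicity) direction and the sufficiency direction, each closely mirroring the corresponding step in~\cite{brandao2013second}, with the payoff function $\workf_\sys$ playing the role of an effective Hamiltonian.

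For necessity, I would first observe that any \GTO{} on $\sys$, after tracing out $\bath$, $\batt$, and any ancilla $\anc$, induces a CPTP map $\mathcal{E}_\sys$ on $\sys$. By Theorem~\ref{gibbspreserving}, $\mathcal{E}_\sys$ preserves $\gibbsS$. Because $\gibbsS \propto e^{-\beta \workf_\sys}$ is diagonal in the $\workf_\sys$ eigenbasis, the classical R\'enyi divergence $D_\alpha(\initial \Vert \gibbsS)$ of Eq.~\eqref{eq:renyidivergence-2} equals $D_\alpha$ evaluated on the two probability distributions obtained by dephasing $\initial$ and $\gibbsS$ in that basis. Applying the contractivity of each $D_\alpha$ for $\alpha \geq 0$ under the Gibbs-preserving CPTP map $\mathcal{E}_\sys$, combined with the monotonicity of dephasing, yields $D_\alpha(\initial \Vert \gibbsS) \geq D_\alpha(\final \Vert \gibbsS)$. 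This inequality translates directly into $F_\alpha(\initial,\gibbsS) \geq F_\alpha(\final,\gibbsS)$ via the definition~\eqref{eq:genfree}.

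For sufficiency, the assumption $[\workf_\sys,\final]=0$ ensures that $\final$ and $\gibbsS$ are simultaneously diagonal in the $\workf_\sys$ eigenbasis. The hypothesis that $F_\alpha(\initial,\gibbsS) \geq F_\alpha(\final,\gibbsS)$ holds for every $\alpha \geq 0$ then reduces, via Eq.~\eqref{eq:renyidivergence-2}, to a full family of classical R\'enyi-divergence inequalities on the probability distributions of $\initial$ and $\final$ (relative to the Gibbs distribution) in that basis. By the classical catalytic-majorization / trumping theorem used in~\cite{brandao2013second}, these inequalities are equivalent to the existence of a classical catalyst system $\cat$ and a Gibbs-preserving classical channel sending $\initial \otimes \rho_\cat$ to $\final \otimes \rho_\cat$. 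The final step is to lift this classical channel to a genuine catalytic \GTO{}: one represents the classical stochastic transition as a unitary on $\sys\cat\batt$ via a Stinespring-type dilation, and then uses the reference-frame construction of Eq.~\eqref{eq:kitaev-unitary-construction} to render that unitary commuting with every total charge $Q_{j_{\rm tot}}$ to arbitrary precision.

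The main obstacle is the last step of the sufficiency direction. Two ingredients are required: first, the Brandao et al.\ construction of a thermal operation that realizes a given Gibbs-preserving catalytic channel, which extends to the present setting essentially unchanged once $\gibbsS$ is identified via Theorem~\ref{zeroth} as the unique completely passive state with respect to $\workf_\sys$; and second, the reference-frame technique of~\cite{kitaev2014super}, which allows the required unitary to conserve every noncommuting charge globally while acting on $\sys$ in the desired way, provided the reference is taken sufficiently large. The compactness of the Lie group $G$ generated by the $Q_j$'s ensures that finite-size references approximate this ideal limit arbitrarily well, closing the argument.
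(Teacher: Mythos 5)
Your proposal reconstructs the argument the paper has in mind but does not write out. The paper's supplementary note gives no explicit proof of this Proposition: it relies on the two ingredients you identify---Gibbs-preservation from Theorem~\ref{gibbspreserving} together with contractivity of R\'{e}nyi divergences (for necessity), and the classical trumping machinery of~\cite{brandao2013second} together with the reference-frame construction~\eqref{eq:kitaev-unitary-construction} (for sufficiency). So you and the paper take the same route.

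Two small remarks on precision. First, in the necessity direction, the key technical fact is not merely that dephasing is monotone, but that the induced channel $\mathcal{E}_\sys$ is \emph{covariant} with respect to $e^{-it\workf_\sys}$, i.e.\ it commutes with the dephasing map in the $\workf_\sys$ eigenbasis. This covariance follows because the global unitary commutes with $\workf_{\rm tot}$ and the bath is prepared in $\gibbsParam{\bath}$; it guarantees that the diagonal of $\initial$ is mapped to the diagonal of $\final$ by a Gibbs-stochastic classical channel, so that the classical $D_\alpha$ of Eq.~\eqref{eq:renyidivergence-2} obeys data processing. Your phrase ``combined with the monotonicity of dephasing'' should be replaced by this covariance argument to make the step airtight. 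Second, the Proposition is implicitly about operations in which the battery returns unchanged (the paper's subsequent remark clarifies this: ``the second laws clearly govern operations during which no work is performed on the system $S$''); only then does the traced-out map on $S$ preserve $\gibbsS$. For the sufficiency direction, your sketch of the lift to a genuine \GTO{} via a dilation plus the reference-frame construction is at the same level of detail as the paper leaves it, which is fine given the paper's own omission.
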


The $F_\alpha(\initial,   \gibbsS)$'s are called ``monotones.''
Under \GTO{}, the functions cannot increase.
The transformed state approaches the \GGS{} or retains its distance.

Two remarks about extraneous systems are in order.
First, the second laws clearly govern operations
during which no work is performed on the system $S$.
But the second laws also govern work performance:
Let $SW$ denote the system-and-battery composite.
The second laws govern the transformations of $SW$.
During such transformations,
work can be transferred from $W$ to $S$.

Second, the second laws govern transformations
that change the system's Hamiltonian. 
An ancilla facilitates such transformations~\cite{HO-limitations}.
Let us model the change, via external control,
of an initial Hamiltonian $H_\sys$ into $H_\sys'$.
Let $\gibbsS$ and ${\gibbsS}'$ denote the \GGS{}s
relative to $H_\sys$ and to $H_\sys'$.
The second laws become
\begin{align}
   F_{\alpha}( \initial,   \gibbsS ) 
   \geq F_{\alpha}(  \final,   {\gibbsS}'  )
   \; \: \forall \alpha   \geq   0.
\end{align}

%
%
%
%
\paragraph{Extractable work:}
In terms of the free energies, we can bound the work 
extractable from a resource state via \GTO{}.
Unlike in the previous section,
we consider the battery $W$ separately from the system $S$ of interest.
We assume that $W$ and $S$ initially occupy a product state. 
(This assumption is reasonable for the idealised, infinite-dimensional battery
we have been considering.
As we will show, the assumption can be dropped when we focus on average work.)
Let $\initialBatt$ and $\finalBatt$
denote the battery's initial and final states.
For all $\alpha$,
\begin{align}
  F_\alpha(\initial   \otimes   \initialBatt,   \gibbsSBatt)
  \geq F_\alpha(\final   \otimes   \finalBatt,   \gibbsSBatt).
\end{align}
Since
$F_\alpha(   \initial   \otimes   \initialBatt,   \gibbsSBatt) 
= F_\alpha(\initial,   \gibbsS) +
F_\alpha   \left(\initialBatt,   \gibbsBatt   \right)$,
\begin{align}
     F_\alpha   \left( \finalBatt,   \gibbsBatt   \right)
     -   F_\alpha   \left( \initialBatt,   \gibbsBatt   \right)
      \leq 
     F_\alpha(\initial,   \gibbsS) -  F_\alpha(\final,   \gibbsS).
     \label{eq:work-alpha-free-energy}
\end{align}

If the battery states 
$\initialBatt$ and $\finalBatt$ 
are energy eigenstates,
the left-hand side of Ineq.~\eqref{eq:work-alpha-free-energy} 
represents the work extractable during one implementation of the protocol.
Hence the right-hand side  
bounds the  work extractable during the transition $\initial   \mapsto   \final$. 
This bound is a necessary condition
under which work can be extracted.

%
%

When $\alpha=1$, 
we need not assume that $W$ and $S$ occupy a product state.
The reason is that subadditivity implies
$F_1(\rho_{\rm SW},\gamma_{\rm SW})
\leq F_1(\rho_\sys,   \gamma_\sys) + F_1(\rho_{W},\gamma_\batt)$. 
$F_1$ is the relevant free energy 
if only the average work is important.


\paragraph{Quantum second laws:}
As in \cite{brandao2013second}, additional laws can be derived 
in terms of quantum R{\'e}nyi divergences~\cite{HiaiMPB2010-f-divergences,Muller-LennertDSFT2013-Renyi,WildeWY2013-strong-converse,JaksicOPP2012-entropy}. 
These laws provide extra constraints
if $\initial$ (and/or $\final$) has coherences 
relative to the $\workf_\sys$ eigenbasis.
Such coherences would prevent $\initial$ from commuting
with the work function.
Such noncommutation is a signature of truly quantum behavior.
Two quantum analogues of $F_\alpha(\initial,   \gibbsS)$ are defined as
\begin{equation}
   \qalfreesimple_\alpha(\initial,   \gibbsS)
   :=   k_{\rm B} T \frac{{\rm sgn}(\alpha)}{\alpha-1} 
        \log   \Big(  \Tr \left(\initial^\alpha (\gibbsS)^{1-\alpha} \right)   \Big)
        -   k_{\rm B} T   \log (Z)
\end{equation}
and 
\begin{align}
   \qalfree_\alpha(\initial,   \gibbsS)   & :=
   \kB   T   \frac{1}{\alpha-1}\log 
   \left(   \Tr   \left( (\gibbsS)^{\frac{1-\alpha}{2\alpha}} 
             \initial (\gibbsS)^{\frac{1-\alpha}{2\alpha}}   \right)^\alpha \right) 
   \nonumber \\  & \qquad
   -   k_{\rm B}T   \log (Z).
\end{align}
The additional second laws have the following form.

\begin{proposition} 
\GTO{} can transform $\initial$ into $\final$ only if
\begin{align}
   & \qalfree_\alpha(\initial,\gibbsS) 
   \geq \qalfree_\alpha(\final,\gibbsS)
   \quad  \forall    \alpha   \geq \frac12,
   \\ 
   & \qalfree_\alpha(\gibbsS , \initial) \geq \qalfree_\alpha(\gibbsS,   \initial)
   \quad   \forall   \alpha   \in   \left[   \frac12, 1 \right], 
   \quad {\rm and}  
   \\ 
   & \qalfreesimple_\alpha(\initial, \gibbsS) 
   \geq \qalfreesimple_\alpha(\final,\gibbsS)
   \quad  \forall  \alpha  \in  [0,  2] .
\end{align}
\end{proposition}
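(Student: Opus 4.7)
The plan is to deduce each of the three inequalities from the data-processing inequality (DPI) for the corresponding quantum R\'enyi divergence, applied to the NATO $\mathcal{E}$ that implements the transition $\initial \mapsto \final$, together with Theorem~\ref{gibbspreserving}, which guarantees that NATO preserve the NATS.

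First, I would recall from Definition~\ref{definition:GTO} that every NATO acting on $\sys$ is a composition of three elementary operations: appending free states on auxiliary systems (reservoir, battery, ancilla), implementing a global unitary $U$ satisfying $[U, H_{\rm tot}] = 0$ and $[U, Q_{j_{\rm tot}}] = 0$ while returning ancillas to their initial states, and finally tracing out everything except $\sys$. Each of these steps is completely positive and trace preserving, so the induced channel $\mathcal{E}$ on $\sys$ is CPTP. The task is then to show that $\mathcal{E}(\gibbsS) = \gibbsS$: this follows from Theorem~\ref{gibbspreserving} applied at the level of the joint system $\sys\batt\bath\anc$, where the preserved state is the NATS $\gamma_{\sys\batt\bath\anc}$. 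Using the factorization $\gamma_{\sys\batt\bath\anc} = \gibbsS \otimes \gibbsBatt \otimes \gibbsParam{\bath} \otimes \gibbsParam{\anc}$ (since the total Hamiltonian and charges are sums of local terms and the free states are already NATS), the preservation of the global NATS, combined with the partial trace over $\bath$, $\batt$, $\anc$, yields $\mathcal{E}(\gibbsS) = \gibbsS$.

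Second, I would invoke the known DPI results for the relevant quantum R\'enyi divergences. The sandwiched R\'enyi divergence $\qalfree_\alpha(\rho \Vert \sigma)$ is contractive under CPTP maps for all $\alpha \geq \tfrac12$~\cite{Muller-LennertDSFT2013-Renyi,WildeWY2013-strong-converse}; for $\alpha \in [\tfrac12, 1]$ one additionally has contractivity of the ``reversed'' quantity $\qalfree_\alpha(\sigma \Vert \rho)$, as shown in~\cite{JaksicOPP2012-entropy}. The Petz-type R\'enyi divergence $\qalfreesimple_\alpha(\rho \Vert \sigma)$ is contractive under CPTP maps for $\alpha \in [0, 2]$~\cite{HiaiMPB2010-f-divergences}. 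Setting $\rho = \initial$ and $\sigma = \gibbsS$, and using $\mathcal{E}(\gibbsS) = \gibbsS$ and $\mathcal{E}(\initial) = \final$, these contractivity bounds yield the three inequalities immediately, with $\gibbsS$ appearing on both sides because the Hamiltonian and charges of $\sys$ are unchanged in this version of the statement.

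The main obstacle is the step above in which one must establish $\mathcal{E}(\gibbsS) = \gibbsS$ cleanly. Theorem~\ref{gibbspreserving} is phrased as a no-go statement for transforming copies of the NATS; one has to translate it into the positive statement that the channel induced on $\sys$ fixes $\gibbsS$. The care required is to make sure that (i) the freely appended states are themselves NATS with matching $\beta$ and $\mu_j$, so that their joint state with $\gibbsS$ is again a NATS; (ii) the charge-conserving unitary $U$ commutes with the global $e^{-\beta(H_{\rm tot} - \sum_j \mu_j Q_{j_{\rm tot}})}$, so that it leaves the global NATS invariant; and (iii) the partial trace over the non-$\sys$ subsystems of the global NATS returns $\gibbsS$ due to the product structure. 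Once this is in place, the quantum second laws reduce to a direct invocation of established DPI bounds, exactly as the classical version of Proposition~\ref{prop:second-laws} did for $D_\alpha$.
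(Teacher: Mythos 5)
Your proposal follows essentially the same route as the paper: the paper does not spell out a proof of this quantum proposition, but merely asserts it by analogy with the classical second laws and with references to the data-processing inequalities established in the cited literature (Hiai--Mosonyi--Petz--B\'eny, M\"uller-Lennert~\emph{et al.}, Wilde--Winter--Yang, Jak\v{s}i\'c~\emph{et al.}), exactly as you do. Your identification of the key structural fact---that the induced channel on $\sys$ fixes $\gibbsS$---and your reduction of it to (i) product structure of the global \GGS{}, (ii) $U$ commuting with $e^{-\beta(H_{\rm tot}-\sum_j\mu_j Q_{j_{\rm tot}})}$, and (iii) partial trace of a product \GGS{} returning a local \GGS{}, is the correct and standard argument; it is cleaner than trying to ``invert'' Theorem~\ref{gibbspreserving}, whose no-go phrasing does not directly yield the fixed-point statement.

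Two small points are worth flagging. First, you should not need the battery or ancilla to start in their own \GGS{}s (your formula $\gamma_{\sys\batt\bath\anc}=\gibbsS\otimes\gibbsBatt\otimes\gibbsParam{\bath}\otimes\gibbsParam{\anc}$ presupposes this); the \GGS{}-preservation argument applies directly to the sub-channel $\rho_\sys\mapsto\Tr_\bath\bigl[U(\rho_\sys\otimes\gibbsParam{\bath})U^\dagger\bigr]$, while the ancilla (catalyst) and battery need a separate treatment. For catalysts, the standard step---used in the classical Proposition~\ref{prop:second-laws} and in~\cite{brandao2013second}---is additivity of the R\'enyi divergences under tensor products, $D_\alpha(\rho_\sys\otimes\rho_\anc\Vert\gibbsS\otimes\gibbsParam{\anc})=D_\alpha(\rho_\sys\Vert\gibbsS)+D_\alpha(\rho_\anc\Vert\gibbsParam{\anc})$, which cancels the ancilla contribution when the ancilla returns to its initial state. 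Your proposal does not mention this, and it is needed to cover the catalytic case the paper explicitly allows. Second, the statement's middle inequality contains an evident typo ($\final$ should appear on the right-hand side, not $\initial$); you implicitly supply the correct reading when you invoke contractivity of $\qalfree_\alpha(\sigma\Vert\cdot)$ for $\alpha\in[\tfrac12,1]$.
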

\noindent These laws govern transitions during which the Hamiltonian changes via an ancilla, 
as in~\cite{HO-limitations}.

%
%


%
%
%
\bibliography{NatCommsRefs.bibolamazi.bib}

\end{document}